\documentclass[hidelinks, 11 pt]{article}
\usepackage{amsfonts,amsmath,amssymb,amsthm}
\usepackage{appendix}
\usepackage{subfigure}
\usepackage[pdftex]{graphicx}
\usepackage{natbib}
\usepackage[hypertexnames=false]{hyperref}
\usepackage{setspace}
\usepackage{pdfsync}
\usepackage{lscape}
\usepackage{verbatim}
\usepackage{enumerate}
\usepackage[hmargin=1in,vmargin=1in]{geometry}

\usepackage{booktabs, multirow, rotating, threeparttable} 

\hypersetup{colorlinks, linkcolor = {teal}, citecolor = {blue}, urlcolor ={blue!80!black}}

\makeatletter
\renewcommand\paragraph{\@startsection{paragraph}{4}{\z@}%
            {-2.5ex\@plus -1ex \@minus -.25ex}%
            {1.25ex \@plus .25ex}%
            {\normalfont\normalsize\bfseries}}
\makeatother
\setcounter{secnumdepth}{4} 
\setcounter{tocdepth}{4}    

\usepackage{mathrsfs,dsfont}
\usepackage[dvipsnames,svgnames, x11names]{xcolor}
\usepackage{tikz,graphicx,pgfplots,pgfkeys} 
\usetikzlibrary{patterns}
\usepackage{caption}

\newenvironment{customlegend}[1][]{
        \begingroup
      \csname pgfplots@init@cleared@structures\endcsname
        \pgfplotsset{#1} }{
                \csname pgfplots@createlegend\endcsname
        \endgroup
    } 
\def\addlegendimage{\csname pgfplots@addlegendimage\endcsname}

\bibliographystyle{ims}

\DeclareGraphicsExtensions{.jpg}
\DeclareUnicodeCharacter{202F}{\,}

\def\qed{\rule{2mm}{2mm}}

\parskip = 1.5ex plus 0.5 ex minus0.2 ex

\theoremstyle{definition}

\newtheorem{theorem}{Theorem}[section]
\newtheorem{lemma}{Lemma}[section]
\newtheorem{example}{Example}[section]
\newtheorem{definition}{Definition}[section]

\newtheorem{assumption}{Assumption}[section]

\newtheorem{remark}{Remark}[section]

\let\oldmarginpar\marginpar
\renewcommand{\marginpar}[2][rectangle,draw,fill=black, text=white,text width= 2cm,rounded corners]{
    \oldmarginpar{
    \tiny \tikz \node at (0,0) [#1]{#2};}
    }
\pgfplotsset{compat=1.17}

\DeclareMathOperator*{\cov}{Cov}

\DeclareMathOperator*{\var}{Var}

\onehalfspacing

\begin{document}
\author{
Federico A. Bugni \\
Department of Economics\\
Northwestern University \\
\url{federico.bugni@northwestern.edu}
\and
Ivan A.\ Canay \\
Department of Economics\\
Northwestern University\\
\url{iacanay@northwestern.edu}
\and
Steve McBride \\
Head of Economy Science\\ Roblox
}

\bigskip

\title{Decomposition and Interpretation of Treatment Effects in Settings with Delayed Outcomes\footnote{We thank Chuck Manski, Peng Ding, Lihua Lei, Filip Obradovic, and participants at various seminars and conferences for comments. We also thank Danil Fedchenko for excellent research assistance.}}

\maketitle

\vspace{-0.4in}

\begin{spacing}{1.1}
\begin{abstract}
This paper studies settings where the analyst is interested in identifying and estimating the average \emph{direct} causal effect of a binary treatment on an outcome. We consider a setup in which the outcome realization does not get immediately realized after the treatment assignment, a feature that is ubiquitous in empirical settings. The period between the treatment and the realization of the outcome allows other observed actions to occur and affect the outcome. In this context, we study several regression-based estimands routinely used in empirical work to capture the average treatment effect and shed light on interpreting them in terms of ceteris paribus effects, indirect causal effects, and selection terms. We obtain three main and related takeaways under a common set of assumptions. First, the three most popular estimands do not generally satisfy what we call \emph{strong sign preservation}, in the sense that these estimands may be negative even when the treatment positively affects the outcome conditional on any possible combination of other actions. Second, the most popular regression that includes the other actions as controls satisfies strong sign preservation \emph{if and only if} these actions are mutually exclusive binary variables. Finally, we show that a linear regression that fully stratifies the other actions leads to estimands that satisfy strong sign preservation.

\end{abstract}
\end{spacing}

\noindent KEYWORDS: delayed outcomes, treatment effects, regression, total effects, direct effects, downstream engagement, sign preservation.

\noindent JEL classification codes: C12, C14.

\thispagestyle{empty} 
\newpage
\setcounter{page}{1}

\section{Introduction}

We study settings where the analyst is interested in identifying and estimating an average \emph{direct} causal effect of a binary treatment on an outcome, and the treatment status is determined in the context of a randomized controlled experiment or an observational study under conditional independence assumptions. We focus on settings where the outcome of interest does not get immediately realized after treatment assignment, a feature that is ubiquitous in empirical settings, including the study of long-run effects in economic history \citep{voigtlander/voth:12,angelucci/etal:22}, the analysis of long-term valuation in industry settings \citep{jain2002customer,chen:2021}, and randomized experiments in economics and other social sciences \citep{Beaman/etal:2013,Moderna}. The delay in the realization of the outcomes creates a time window between the treatment assignment and the realization of the outcome that, in turn, opens up the possibility for other observed ex-post actions to take place before the outcome is finally realized; see Figure \ref{fig:timeline} for a graphical representation. In this context, we study the interpretation of several popular estimands that arise from running regressions of the outcome on the treatment and different ways of ``controlling'' for the other actions. We emphasize that our use of the term ``regression'' refers to a linear projection that is free of any modeling assumptions on potential outcomes or conditional means. Some of these estimands are not only popular in the economics literature, see, e.g., \citet{fagereng2021wealthy,heckman2013understanding,chernozhukov/etal:21joe}, but are also widely used across other sciences, like psychology, epidemiology, biostatistics, and political science, as shown by the large number of citations associated with the regression approach popularized by \cite{baron1986moderator}. For each of these estimands, our results present a decomposition that facilitates their interpretation in terms of ceteris paribus effects of the treatment on the outcomes, indirect effects caused by the other actions, and selection terms; and provide a framework that allows us to clarify under what type of conditions the practice of ``controlling'' for the presence of other actions leads to estimands that admit the desired interpretation. 

The main findings of this paper can be grouped into three sets of results. First, the standard practice of studying estimands that arise from a regression of an outcome on the treatment, with or without ``controlling'' for the other actions in such regressions, does not generally satisfy what we call \emph{strong sign preservation}. Strong sign preservation, formally characterized in Definition \ref{def:strong-SP}, is satisfied when an estimand that intends to measure a \emph{ceteris paribus} (or direct) causal effect of a treatment on an outcome is positive when the effect of the treatment on the outcome is positive conditional on \emph{all} possible values of the other actions. Failure to satisfy strong sign preservation introduces a Simpson’s paradox-like sign reversal where the estimands may be negative even when the treatment positively affects the outcome for any possible combination of other actions. Second, the most popular estimand that linearly controls for the other actions in the regression, and that we label the long regression, does not generally provide benefits relative to the short regression that includes no controls whatsoever. More concretely, while neither the short nor the long regression satisfies strong sign preservation, the estimand associated with the long regression admits a decomposition in terms of weighted averages of well-defined causal effects but where the weights could potentially be negative. This feature introduces yet another source that may separate the sign of the estimand from the sign of ceteris paribus causal effects. Notably, this feature also occurs when the regression includes interaction terms between the treatment and the other actions. Perhaps our most salient result is the one in Theorem \ref{thm:long}, which shows that the long regression delivers easy-to-interpret results \emph{if and only if} the other actions are all binary and mutually exclusive random variables. Finally, while non-parametric identification of the effects is straightforward under the stronger form of our assumptions and follows directly from a saturated regression, we also show that a linear regression that properly controls for other actions through complete stratification produces estimands that satisfy strong sign preservation. We term this the “strata fixed effects regression” due to its link to the practice of including strata fixed effects in randomized controlled trials with covariate adaptive randomization (see \cite{bugni/canay/shaikh:2018,bugni/canay/shaikh:2019}).

The decompositions we derive for each of the five regression-based estimands can be interpreted as a partition into ``direct'' and ``indirect'' effects (and possibly ``selection'' effects depending on the assumptions), and so our results are linked to the vast literature on mediation analysis, see, e.g., \cite{baron1986moderator}, \cite{pearl:01}, \cite{robins:03}, \cite{kaufman2004critique}, \cite{kaufman2009gilding}, \cite{imai2010}, \cite{vanderweele/etal:14}, and Remark \ref{rem:mediator} for a discussion. See \cite{vanderweele-book:15} for a book-level treatment. However, as opposed to the literature on mediation that studies the type of assumptions that would identify the causal effects of the so-called \emph{mediators}, which in our context would simply be the other actions taken before the outcome is realized, here our goal is not to identify these indirect effects but rather to gain a better understanding of how to properly interpret certain popular estimands of the effect of the treatment on the outcome without the aid of functional form assumptions. 

Beyond the literature on mediation analysis, our paper also connects to several strands of research in econometrics and biostatistics. First, we are not the first to highlight the importance of distinguishing between “direct” (or ``partial'') and “total” causal effects in econometrics. Early discussions include those in \citet{Manski:97} and \citet{heckman:00}; see also Remark \ref{rem:ssp-comment}. Second, our results relate to a large body of research that interprets regression-based estimands in various settings, often characterizing them as weighted averages of causal effects of interest—sometimes involving negative weights. Examples include the literature on two-way fixed effects in difference-in-differences settings \citep{DeChaisemartin/xavier:20,goodman-bacon:21,callaway/santanna:21,sun/abraham:21,borusyak/jaravel:22}; on local instrumental variables and marginal treatment effects \citep{canay/etal:2023}; and on contamination bias when conditioning on covariates \citep{angrist:98,goldsmithpinkham/etal:22}. Our paper is also connected to the literature on design-based analyses of regression estimands in factorial experiments \citep{zhao/ding:22}. These connections are primarily algebraic—stemming from shared properties of least squares—though the questions we address and the main takeaways of our analysis are distinct. Third, our results on the failure of strong sign preservation for popular estimands resonate with the “surrogate paradox” in the surrogacy literature, where a treatment may have a positive effect on a surrogate, and the surrogate may be positively correlated with the outcome, yet the treatment effect on the outcome may be negative; see \citet{chen/etal:07} and \citet{vanderweele:13}. Finally, the decomposition we derive for the short regression mirrors those developed in the literature on causal interactive effects—where units may be subject to multiple types of exposures (e.g., genetic and environmental)—as in \citet{robins1992identifiability} and \citet{vanderweele:14}. Our focus on a broader range of regression estimands, however, sets our work apart.

The remainder of the paper is organized as follows. Section \ref{sec:notation} introduces the basic notation. Section \ref{sec:causal-effects} defines the main concepts we use throughout the paper, including partial causal effects, direct causal effects, and strong sign preservation. Section \ref{sec:decomposition} introduces the five estimands we study  and then presents the main results on how each of these estimands admits different decompositions into direct, indirect, and selection effects. Finally, Section \ref{sec:conclusion} concludes.  

\section{Setup and Notation}\label{sec:notation}

Consider a setting where  $Y$  denotes the observed outcome of interest, and the actions taken by individuals or units under study are divided into a ``main'' action of interest, denoted by  $D$, and ``other'' actions, denoted by $A$:
\begin{equation}\label{eq:DA}
    (D,A) \in \mathcal D \times \mathcal  A ~.
\end{equation}
Let  $X$  represent other observed covariates, which include features beyond actions.

All actions are assumed to be discrete. The main action,  $D$, is further assumed to be binary, i.e.,  $\mathcal D \equiv \{0,1\}$. The other actions,  $A$, form a  $K$-dimensional vector taking values in \( \mathcal A \equiv \{a=(a_1,\dots,a_{K}): a_j\in \mathcal A_j \text{ for } j=1,\dots,K \} \), where each $ \mathcal A_j$  is a finite set. For notational convenience, we assume $ \mathcal A_j \subseteq \mathbb N$,  $0 \in \mathcal A_j$, and that when  $|\mathcal A_j| = 2$, we have $ \mathcal A_j = \{0,1\}$. These additional restrictions simplify the discussion and expressions but are not required for our results.

The setting we study in this paper is one with the following characteristics. First, the analyst controls the action of interest $D$ via either a randomized controlled experiment or, alternatively, by an exogeneity assumption like selection on observables. We therefore alternatively call this action the ``treatment''. Second, the outcome $Y$ is not instantaneous and takes some time to be realized within the timeline of the experiment. In the period between treatment assignment and outcome realization, the other actions in $A$ are chosen by the units—these may be modeled as conditionally exogenous in some applications, or left unmodeled altogether. Figure \ref{fig:timeline} illustrates the setting. Below we describe some empirical applications in economics, social sciences, and industry that naturally fit into this setting. 

\begin{figure}
\large
~~~~~~~~~~~\begin{tikzpicture}[]
 
  \newcount\nOne; \nOne=-6
  \def\w{12}      
  \def\n{5}       
  \def\noffset{1} 
  \def\nskip{0}   
  \def\la{2.00}   
  \def\lt{0.20}   
  \def\ls{0.15}   
 
  \def\myx(#1){{(#1-\nOne)*\w/\n}}
  \def\arrowLabel(#1,#2,#3,#4){
    \def\xy{(#1-\nOne)*\w/\n}; \pgfmathparse{int(#2*100)};
    \ifnum \pgfmathresult<0
      \def\yyp{{(\lt*(-0.10+#2))}}; \def\yyw{{(\yyp-\la*\lt*#3)}}
      \draw[<-,thick,black!50!blue,align=center]
        (\myx(#1),\yyp) -- (\myx(#1),\yyw)
        node[below,black!80!blue] {#4}; 
    \else
      \def\yyp{{(\lt*(0.10+#2)}}; \def\yyw{{(\yyp+\la*\lt*#3)}}
      \draw[<-,thick,black!50!blue,align=center]
        (\myx(#1),\yyp) -- (\myx(#1),\yyw)
        node[above,black!80!blue] {#4};
    \fi}
  \def\arrowLabelRed(#1,#2,#3,#4){
    \def\yyp{{(\lt*(-0.10+#2))}}; \def\yyw{{(\yyp-\la*\lt*#3)}}
    \fill[red,radius=2pt] (\myx(#1),0) circle;
    \draw[<-,thick,black!25!red,align=center]
      (\myx(#1),\yyp) -- (\myx(#1),\yyw)
      node[below,black!40!red] {\strut#4}; 
    }
 
  \draw[->,thick] (-\w*0.03,0) -- (\w*1.06,0)
                  node[right=4pt,below=6pt] {timeline};
 
  \arrowLabel( -5,1.2,2.7,$D=d$) 
  \arrowLabel(-2.00,1.2,2.5,$Y$)     
 
  \arrowLabelRed(-5,-1.2,1.0,Treatment) 
  \arrowLabelRed(-2,-1.2,1.0,Outcome realized) 
 
  \draw[<->,thick,black!20!orange]
    ({(-4.9-\nOne)*\w/\n},0.95) -- ({(-2.1-\nOne)*\w/\n},0.95)
    node[midway,below=1pt] {Other actions take place}
    node[midway,above=1pt] {$A_1,\dots,A_K$};
 \draw[<->,thick,black!20!blue]
    ({(-4.9-\nOne)*\w/\n},-0.7) -- ({(-2.1-\nOne)*\w/\n},-0.7)
    node[midway,above=1pt] {Outcome not yet realized};
 
\end{tikzpicture}
\caption{{\small Timeline of actions. The first action, $D$, is assumed to be (conditionally) exogenous and is the main action of interest (so we refer to it as the treatment). The outcome is not instantaneous and may take a short or long period of time to get realized. In the meantime, units choose the value of the other actions $A_1,\dots,A_K$.}}\label{fig:timeline}
\end{figure}

The first class of applications that fit our framework is the literature that studies long-run outcomes in economics. There, interest typically lies in a treatment that happened several years in the past (oftentimes hundreds of years ago) on some outcome of interest in more recent times. For example, \cite{voigtlander/voth:12} study the effect of the existence of Black Death pogroms in 1349 ($D$) on the level of anti-Semitism in Nazi Germany in 1920s ($Y$) at the city level. Other actions that are included in the analysis include city population in the 1920s ($A$), among other variables that are determined closer to the realization of the outcomes. Other examples include \citet{nunn:08} and \cite{angelucci/etal:22}, among many others. In these applications, it is common to argue that the treatment is exogenous invoking selection on observables assumptions or, in some cases, relying on instrumental variables approaches. Our results are limited to settings exploiting conditional exogeneity. 

The second class of applications arises in the analysis of long-term valuation in marketing and industry. In these settings, companies such as Airbnb, Uber, Google, and Microsoft operate platforms where users engage in a wide range of actions—such as making a booking, writing a review, subscribing to a service, or watching a video—and are interested in estimating the long-run value that a specific product or action generates for the firm. For instance, \citet{chen:2021} discusses how Airbnb measures the long-term value of user actions by tracking the revenue generated over a 365-day window. The example focuses on the effect of a guest making a booking ($D$) on long-term revenue ($Y$), while other intermediate actions—such as cancellations or writing reviews—are captured in the vector $A$. Although randomized experiments are often feasible for some actions, others cannot easily be randomized due to ethical, legal, or user experience concerns. In such cases, it is common to rely on conditional exogeneity (i.e., selection on observables) and focus on one action of interest at a time. Importantly, in these applications, the goal is often to identify a \emph{direct} causal effect of the focal action ($D$) on the outcome ($Y$), holding other actions ($A$) fixed. This approach helps avoid the problem known as \emph{double counting}—that is, attributing the same impact multiple times through correlated or causally related behaviors. For this reason, isolating the direct effect of $D$ is not merely of academic interest, but a practical necessity in how firms estimate the marginal contribution of individual features or interventions. Related examples and discussions can be found in \citet{Lewis/etal:11} and \cite{xu/etal:2015}, who discuss how correlated consumer behaviors—termed ``activity bias”—can lead to substantial overestimation of advertising effectiveness.

The third class of applications includes clinical trials and randomized experiments with a follow-up period between the treatment assignment and the realization of the outcome of interest. There, researchers typically fully control the assignment of the main treatment of interest but are unable to restrict behavior in the follow-up period. For example, the clinical trial run by \cite{Moderna} to study the efficacy of the Moderna COVID-19 vaccine against SARS-CoV-2 infections. Participants in the study were randomized to Immediate Vaccination Group 1 (receiving the Moderna COVID-19 Vaccine on Day 1 and Day 29) or Standard of Care Group 2, with vaccination given at months 4 and 5. During the months following vaccinations, participants received visits that checked for infections and could include blood collection, nasal swabs, SARS-CoV-2 screening, COVID-19 symptom checks, and questionnaires. In this example, $D$ would be an indicator of whether the participant received a vaccine, $Y$ would be an indicator of whether the participant got infected within the 4 months of the study, and $A$ would include other actions taken by the participants that could affect infection rates, like whether the participants wear masks in public, whether the participants avoid large gatherings, etc.\footnote{In this paper we abstract away from spillover effects between individuals, which could be relevant in the context of this and other examples we describe.} Another example is \cite{Beaman/etal:2013}, who conducted a randomized field experiment that provides free fertilizer to female rice farmers in Mali to measure how they choose to use the fertilizer and the overall impact on profitability. The authors distributed the fertilizer in May 2020 and conducted two follow-up surveys, one in August 2020 and one in December 2020, right after the harvest. In this example, $D$ would be an indicator of whether the farmer received free fertilizer, $Y$ would be a measure of output like crop yield or just profits, and $A$ would include all relevant complementary agricultural inputs, such as labor, herbicides, and water usage. Due to its simplicity, we use this farming example to illustrate the concepts we define in the next section.

\begin{remark}\label{rem:mediator}
    What we call the other actions in Figure \ref{fig:timeline} can be alternatively labeled as ``mediator'' variables since these are post-treatment variables that occur before the outcome is realized, see, e.g., \cite{baron1986moderator}, \cite{pearl:01}, \cite{robins:03}, \cite{imai2010},  \cite{vanderweele/etal:14}, and \cite{vanderweele-book:15}, among many others. However, our work deviates from this literature in two important ways. First, while the literature on causal mediation analysis focuses on the identification of causal effects induced by mediators, our focus in this paper is to understand whether common estimands that are used to capture causal effects of main action $D$ on the outcome $Y$ admit clear interpretations through the lens of total and direct effects. Second, our decompositions in terms of direct and indirect effects are defined in terms of potential outcomes for all of the actions, including those that may be labeled as mediators, and this implies ``indirect'' effects in our context do not coincide with the definition of indirect effects in the mediation literature but rather with the so-called ``controlled'' effects discussed by \cite{pearl:01} and \cite{robins:03}; see Remark \ref{rem:controlled-effects} for additional discussion on this distinction. It is worth noting, however, that several of our results have implications for the causal mediation literature, and we discuss these implications as we present our main results.  \qed
\end{remark}

\begin{remark}\label{rem:beyond-mediator}
   The problems associated with the presence of the other ``endogeneous'' actions $A$ could arise in settings that do not require the presence of ``delayed outcomes''. Throughout the paper, we use the term ``endogenous’’ to refer to actions that may respond to the treatment, in the sense that their distribution may vary with the value of $D$. For example, the factorial experiments considered by \cite{zhao/ding:22}, the problems associated with ``bad controls'' discussed by \cite{angrist2008mostly}, or the mediation framework described in Remark \ref{rem:mediator}, are all about endogenous actions that are not necessarily related to delayed outcomes. While we do not need to invoke delayed outcomes to introduce the type of interpretation challenges we discuss here, we choose to do so because it directly speaks to the examples that motivated this paper. \qed
\end{remark}

We denote potential outcomes by $Y(d,a)$ and their expectation by 
\begin{equation}\label{eq:mu}
    \mu(d,a)\equiv E[Y(d,a)]~.
\end{equation}
We also introduce the concept of a pooled potential outcome to isolate the counterfactual outcome associated with the main action of interest (the treatment), 
\begin{equation}\label{eq:pooled} 
    Y(d) = Y(d,A(d)) = \sum_{a\in\mathcal A} Y(d,a)I\{A(d)=a\}~, 
\end{equation}
where $A(d)$ denotes potential outcomes for the actions $A$ as a function of the treatment $d$. Finally, the observed outcome $Y$ is related to potential outcomes by the relationship 
\begin{equation} \label{eq:obsy}
    Y = \sum_{(d,a)\in\mathcal D\times \mathcal A} Y(d,a)I\{(D,A)=(d,a)\}~.
\end{equation}

In the next section, we define the type of causal effects we are interested in this paper, as well as the assumptions that we invoke to interpret the decompositions we derive for a variety of estimands. 

\section{Causal Treatment Effects}\label{sec:causal-effects}
We start by discussing the type of counterfactual treatment effects that could interest the researcher in the canonical setting where $D$ is binary. Viewing $Y(d,a)$ as a function of two types of actions immediately suggests that there could be total effects, direct effects, and indirect effects, all of which may or may not be of interest in the context of a concrete application. Understanding the variety of causal effects that one could describe, in turn, will help us provide representations and interpretations of commonly used target parameters, like the average treatment effect (ATE), in terms of these types of causal effects. We start with what is perhaps one of the most natural types of \emph{ceteris paribus} effects in Definition \ref{def:PCE}.  

\begin{definition}[Average Partial Causal Effect]\label{def:PCE} An {\bf average partial causal effect} of $D$ on the outcome $Y$ is any difference of the form $\mu(1,a) - \mu(0,a)$, where the value $a\in\mathcal A$ is kept constant.
\end{definition}

Definition \ref{def:PCE} defines an average partial causal effect of the main action as a mean comparison that keeps the value of the other actions unchanged in both states of the comparison (\emph{ceteris paribus}). In the farming application of \cite{Beaman/etal:2013}, it would capture the average causal effect of using fertilizer on the crop yield, while keeping other inputs, like labor, herbicides, and water usage, constant in the counterfactual comparison. 

The \emph{ceteris paribus} effect in Definition \ref{def:PCE} could also be defined conditional on certain events or subpopulations. To account for this, we also consider the concept in Definition \ref{def:PCE} conditional on some set $\Omega$, i.e., $E[Y(1,a)-Y(0,a) \mid \Omega ]$, where $\Omega$ is a function of $(D,A,X)$. For example, $\Omega = I\{D=1\}$ would lead to an average partial causal effect on the treated and $\Omega = I\{X=x\}$ would lead to an average partial causal effect for units with covariates $x$.  

The definition of a partial causal effect for the main action delivers a potentially different causal effect for each possible value of the other actions or, alternatively, provides a collection of partial causal effects indexed by $a\in\mathcal A$. While the goal could just be to identify such a collection of effects, in many settings it may be natural to aggregate this collection of partial effects in a way that summarizes the effect of the main action on the outcome of interest. The following definition defines a direct causal effect as any weighted average of partial causal effects.   

\begin{definition}[Average Direct Causal Effect]\label{def:DCE} The {\bf average direct causal effect} of $D$ on the outcome $Y$ is any convex combination of average partial causal effects of $D$ on $Y$. That is, 
\begin{equation}\label{eq:DCE}
  \sum_{a\in \mathcal A} \omega(a) (\mu(1,a) - \mu(0,a))~,
\end{equation}
where $\omega(a) \in [0,1]$ for all $a\in \mathcal A$ and $ \sum_{a\in \mathcal A} \omega(a)=1$.
\end{definition}
In the context of our farming example with $A$ only capturing low and high water usage for simplicity, the parameter \eqref{eq:DCE} combines the average causal effect of using fertilizer on the crop yield for units with high water usage, say $A=1$, and units with low water usage, say $A=0$. Average direct causal effects could also be defined conditional on a set $\Omega$. The definition does not determine how the groups are weighted, but it requires that no group is assigned a negative weight. In this sense, any average direct causal effect satisfies \emph{strong sign preservation}, as defined below. 

 \begin{definition}[Strong Sign Preservation]\label{def:strong-SP} Let $\Delta$ be an estimand that intends to capture the causal effect of the treatment $D$ on the outcome $Y$. We say that $\Delta$ satisfies {\bf strong sign preservation} if $$\mu(1,a)-\mu(0,a)>0 \text{ for all $a\in \mathcal A$ implies }\Delta>0~.$$
 \end{definition}
In our farming example with $A$ only capturing low and high water usage, strong sign preservation of a parameter $\Delta$ implies that whenever fertilizers improve the expected crop yield both for units with high water usage and units with low water usage, $\Delta$ should be positive as well. As the name suggests, strong sign preservation does not allow for the possibility of what it is typically referred to as \emph{sign reversal}, understood as a situation where $\Delta<0$ when $\mu(1,a)-\mu(0,a)>0$ for all $a\in \mathcal A$.

A premise of this paper is that researchers employ regression-based estimands to obtain an average direct causal effect, as defined in Definition \ref{def:DCE}. That is, the primary interest is not in identifying \emph{indirect} effects—capturing the impact of $D$ on $Y$ through the actions $A$—nor in identifying the total effect, which allows the actions $A$ to vary across treatment groups. While such alternative effects may be of interest in certain applications, our aim is to clarify when the estimands described in the next section can be interpreted in line with Definition \ref{def:DCE}.

\begin{remark}\label{rem:ssp-comment}
    While strong sign preservation may be perceived as a key requirement for parameters that intend to identify partial causal effects, it may not be a reasonable requirement in settings where the counter-factual question of interest involves total effects. The distinctions between ``partial'' and ``total'' causal effects have appeared in the literature in a variety of contexts, even beyond the mediation analysis literature discussed in Remark \ref{rem:mediator}, where \cite{pearl:01} and \cite{robins:03} provide comprehensive treatments on these distinctions. For example, \cite{heckman:00} defines a causal effect as the partial derivative with respect to an isolated action. He acknowledges that assuming that an action can vary independently of all other actions is strong, yet he regards it as   ``...essential to the definition of a causal parameter''. \citet[][pages 1321 and 1323]{Manski:97}, in turn,  provides two interpretations of potential outcomes (one that keeps other actions fixed and another one that lets the other actions change in response to the main action) and clarifies that the interpretation of treatment effects depends on how we think about potential outcomes. Here, we do not dwell on discussions about the relative merits of partial or total effects but rather seek to understand whether commonly used estimands in empirical work admit either of these (commonly sought after) interpretations under different assumptions. \qed
\end{remark}

\begin{remark}\label{rem:controlled-effects}
    Our definitions of average causal partial effects and average direct causal effects align with the concept of ``controlled'' effects in the mediation analysis literature; see \citet{pearl:01}, \citet{robins:03}, and \citet{kaufman2009gilding}, among others. These differ from the more commonly emphasized “natural” direct effects. In our notation, the average natural direct effect is given by \( E[Y(1, A(d)) - Y(0, A(d))] \), where the action (or mediator) \( A \) is set to the level it would naturally take under the reference treatment \( D = d \). The average natural indirect effect, in turn, is given by \( E[Y(d, A(1)) - Y(d, A(0))] \) for $d\in\{0,1\}$. While the total effect, $E[Y(1)-Y(0)]$, decomposes into natural direct and indirect effects without additional assumptions, the decomposition into controlled direct and indirect effects generally introduces selection terms unless further assumptions are imposed—as we show in the next section. On the other hand, identifying natural effects typically requires stronger assumptions than those needed for identifying controlled effects; see \citet{vanderweele/etal:14}. \qed
\end{remark}

The decompositions we present in this paper are purely algebraic and, in their most general form, do not rely on any identification assumptions. However, to interpret the terms in these decompositions as causal effects—or to simplify their expressions—it is useful to introduce assumptions that both facilitate interpretation and connect our framework to assumptions commonly used by practitioners and related work in the literature.

We begin with a baseline assumption that allows for the identification of a \emph{total} effect of $D$ on $Y$, i.e., $E[Y(1)-Y(0)]$. This assumption is standard in the literature on treatment effects in settings that do not explicitly account for the presence of the other actions $A$.

\begin{assumption}\label{ass:basic-SOO}
For all $d \in \mathcal D$, the treatment $D$ is conditionally independent of the potential outcome $Y(d)$ given covariates $X$, i.e.,
\[
D \perp Y(d) \mid X~.
\]
\end{assumption}

Assumption \ref{ass:basic-SOO} can be justified by the design of a randomized controlled experiment \citep[e.g.,][]{Beaman/etal:2013} or by relying on a sufficiently rich set of covariates—as is often the case in industry applications—to make the conditional independence assumption credible \citep[as in][]{chen:2021}. Three points are worth emphasizing. First, as previously noted, this assumption is not required to derive our decompositions, but rather to aid in their causal interpretation. Second, as noted by a referee, one could replace Assumption \ref{ass:basic-SOO} with $D \perp (Y(d,a), A(d)) \mid X$ for all $(d,a) \in \mathcal{D} \times \mathcal{A}$. Although this yields equivalent interpretations, we favor Assumption \ref{ass:basic-SOO} as it more closely aligns with applications where other actions are not explicitly modeled. Third, while Assumption \ref{ass:basic-SOO} suffices to identify the average treatment effect of $D$ on $Y$, it does not generally support a causal interpretation of estimands intended to capture \emph{ceteris paribus} or ``direct'' effects of $D$ when $A$ is also present. For that purpose, we consider an alternative assumption that grants conditional independence of $Y(d,a)$ with respect to both $D$ and $A$ given $X$.

\begin{assumption}\label{ass:joint-SOO}
For all $(d,a) \in \mathcal D \times \mathcal A$, the pair $(D, A)$ is conditionally independent of the potential outcome $Y(d,a)$ given $X$, i.e.,
\[
(D, A) \perp Y(d,a) \mid X~.
\]
\end{assumption}

Assumption \ref{ass:joint-SOO} reframes the problem as one involving multiple (conditionally exogenous) treatments. Three comments are worth pointing out. First, this assumption is strong in the sense that it enables nonparametric identification of the full mean response function $\mu(d,a)$, as discussed in Section \ref{sec:sat}. In fact, it is equivalent to Assumptions (i) and (ii) in \citet{vanderweele/etal:14}, which are sufficient for identifying controlled direct effects. Second, as noted by a referee, one could instead impose the stronger condition $(D, A(1), A(0)) \perp Y(d,a) \mid X$ for all $(d,a)$, which leads to the same interpretations. We favor Assumption \ref{ass:joint-SOO} as it aligns better with our focus on controlled effects. Third, while it may hold in settings with a rich set of covariates, it may be questionable when only $D$ is experimentally assigned and $A$ reflects behavioral responses to $D$.

To illustrate, in the agricultural production example, if fertilizer ($D$) is randomly assigned but farmers adjust water usage ($A$) in response to unobserved shocks that also affect yields—such as weather or soil quality—then Assumption \ref{ass:joint-SOO} would likely fail. By contrast, in many industry applications, it is common to condition on high-dimensional covariates and assume that all actions, including $A$, are conditionally exogenous. In these settings, $D$ is often designated as the action of interest not because it is fundamentally different from the other actions, but because of how it enters a business decision—for example, it may correspond to the action assigned to a particular team or product line. As such, the full vector $(D,A)$ may be plausibly treated as jointly exogenous given a rich enough set of observed variables.

Importantly, while Assumption \ref{ass:joint-SOO} permits identification of \( \mu(d,a) \), we show in later sections that this is still not sufficient for interpreting common regression-based estimands as average direct causal effects of \( D \) on \( Y \), as defined in Definition \ref{def:DCE}. This underscores the fragility of such interpretations even under relatively strong identification assumptions.

\begin{remark}\label{rem:mean-assumptions}
The results in this paper are expressed in terms of conditional expectations and would still hold if the conditional independence assumed in Assumptions \ref{ass:basic-SOO} and \ref{ass:joint-SOO} were weakened to conditional mean independence. We choose to retain the stronger form of these assumptions to facilitate comparison with the typical assumptions used in the literature, which are usually stated in terms of conditional statistical independence.
    \qed
\end{remark}

\section{Decomposing Common Estimands}\label{sec:decomposition}
In this section, we analyze five natural and highly popular estimands intended to capture treatment effects of $D$ on $Y$. For each of these estimands, we derive a decomposition in terms of parameters that can be labeled according to Definitions \ref{def:PCE} and \ref{def:DCE} and discuss under what assumptions they can be interpreted as intended. To keep our exposition as simple as possible and be able to zoom in on the type of concerns we intend to highlight, in what follows we abstract away from issues related to improper control of the covariates $X$. In other words, we ignore the role of the covariates in the type of regressions we consider. This could be interpreted as a situation where the covariates are discrete, and the regressions are viewed as within-cell regressions with cells given by $X=x$, or more generally, where the covariates have been already accounted for by other means, like clustering or via a partially linear model, among many possibilities. In particular, we do not consider the possibility that the analyst improperly controls for these covariates by simply including a linear term in $X$ in the regressions, as this would create additional problems to those discussed here; see \cite{goldsmithpinkham/etal:22} for a detailed treatment on the consequences of not properly controlling for confounders. We also re-iterate that our use of the term ``regression'' refers to a linear projection that is free of any modeling assumptions on potential outcomes or conditional means.  

The first such estimand is the usual difference in means, which we write here as the slope coefficient $\Delta_{\rm short}$ in a regression (projection) of $Y$ on $D$ and a constant term,
\begin{equation}\label{eq:short-regression}
  \text{\bf Short regression:}~~~ Y = \beta + \Delta_{\rm short} D + U~,
\end{equation}
where $E[UD]=0$ by properties of projections and $E[U|D]=0$ follows from $D$ being binary. We call this the short regression. 

The second estimand is the slope coefficient $D$ in a linear regression of $Y$ on $D$, a constant term, and the $K$ actions $A_1,\dots,A_K$,
\begin{equation}\label{eq:long-regression}
  \text{\bf Long regression:}~~~ Y =  \Delta_{\rm long} D + \theta_0 + \sum_{j=1}^K \theta_{j}A_j + V~,
\end{equation}
where $E[VD]=E[VA_j]=0$ by properties of projections. We call this the long regression.

The third estimand is the slope coefficient $D$ in a linear regression of $Y$ on $D$, a constant term, the $K$ actions $A_{1},\dots ,A_{K}$, and their interactions with $D$,
\begin{equation}\label{eq:inter-regression}
  \text{\bf Long regression with interactions:}~~~ Y=\Delta_{\mathrm{inter}}D+\theta _{0}+\sum_{j=1}^{K}\theta _{j}A_{j}+\sum_{j=1}^{K}\lambda _{j}A_{j}D+e~,
\end{equation}
where $E[eD]=E[eA_j]=E[eA_jD]=0$ by properties of projections. We call this the long regression with interactions. Note that this is not a fully saturated regression in general, since the random variables $A_j$ are allowed to take arbitrary values in $\mathbb N \cup \{0\}$. 

The fourth estimand is the slope coefficient $D$ in a regression of $Y$ on $D$ and a set of indicator functions for all the values that $A$ takes,
\begin{equation}\label{eq:sfe-regression}
  \text{\bf Strata fixed effects (SFE) regression:}~~~ Y =  \Delta_{\rm sfe}  D + \sum_{a\in\mathcal A}  \theta(a) I\{A=a\} + \nu ~,
\end{equation}
where $E[\nu D]=E[ \nu I\{A=a\}]=0$ by properties of projections. Note that this is a regression of $Y$ on $D$ with ``strata fixed effects'', where the event $\{A=a\}$ defines a stratum for each value of $a$. As a result, we call this the strata fixed effect (SFE) regression. 

The last set of estimands are the slope coefficients $\Delta_{\rm sat}(a)$, for $a\in\mathcal A$, in a saturated regression of $Y$ on a set of indicator functions for all the values that $A$ takes and their interactions with $D$, 
\begin{align}
\text{\bf Saturated (SAT) regression:}~~~ Y =  \sum_{a\in\mathcal A} \gamma(a)I\{A=a\}+ \sum_{a\in\mathcal A} \Delta_{\rm sat}(a)I\{A=a\}  D + \epsilon~,\label{eq:sat-regression}
\end{align}
where $E[\epsilon D I\{A=a\}]=0$ by properties of projections and $E[\epsilon|D,I\{A=a\}]=0$ follows from $D$ and $I\{A=a\}$ being binary for all $a\in\mathcal A$. We call this the saturated (SAT) regression. 

\begin{remark}\label{rem:glynn}
   The use of short, long, and long with interaction regressions in the social science literature is ubiquitous. When \cite{glynn2012product} discusses the popularity of these regressions, he writes that long regressions are so pervasive within the social science and empirical mediation literature that ``examples are too numerous to cite.'' Indeed, \cite{baron1986moderator}, the paper that largely established the use of these and related regressions, has over $137,000$ citations as of 2025. \qed
\end{remark}

\subsection{Short regression}\label{sec:short}

The short regression is algebraically very simple, so we build up toward the main result introducing the main concepts and notation along the way. The other regressions, on the contrary, have more opaque derivations, and so in those cases, we first present the formal results and then discuss their interpretation. 

The slope coefficient $\Delta_{\rm short}$ in \eqref{eq:short-regression} equals $\Delta_{\rm short} = E[Y|D=1]- E[Y|D=0]$ by elementary arguments. If we define 
\begin{equation}\label{eq:pis}
\pi_{d}(a) \equiv P\{A=a|D=d\} ~,
\end{equation}
and note that 
\begin{equation*}
  E[Y|D=d] = \sum_{a\in \mathcal A} E[Y(d,a)|D=d,A=a]\pi_{d}(a) ~,
\end{equation*}
we can decompose $\Delta_{\rm short}$ into the following three terms,  
\begin{equation}\label{eq:short-decomposition}
  \Delta_{\rm short}=\Delta_{\rm dce}^{\rm s} + \Delta_{\rm ind}^{\rm s} + \Delta_{\rm sel}^{\rm s} 
\end{equation}
where 
\begin{align}
  \Delta_{\rm dce}^{\rm s} &\equiv \sum_{a\in \mathcal A} \pi_{1}(a) E[Y(1,a)-Y(0,a)| D=1,A=a] \label{eq:short-dce-1}\\
  \Delta_{\rm ind}^{\rm s} &\equiv \sum_{a\in \mathcal A}(\pi_{1}(a)-\pi_{0}(a))(E[Y(0,a)| D=0,A=a]-E[Y(0,0)| D=0,A=0])\label{eq:short-ind-1}\\
  \Delta_{\rm sel}^{\rm s} &\equiv \sum_{a\in \mathcal A} \pi_{1}(a) (E[Y(0,a)| D=1,A=a]-E[Y(0,a)| D=0,A=a])\label{eq:short-sel-1}~.
\end{align}
The three terms in the above decomposition for $\Delta_{\rm short}$ have a clear interpretation and show that there are two channels of endogeneity introduced by the fact that the other actions, $A$, take place in-between the treatment assignment and the realization of the outcome of interest. The term in \eqref{eq:short-dce-1}, $\Delta_{\rm dce}^{\rm s}$, captures an average direct causal effect on the treated, as in Definition \ref{def:DCE}. Note that this term conditions on $\Omega = I\{D=1,A=a\}$ and so it is a conditional effect like those previously defined. The term in \eqref{eq:short-ind-1}, $\Delta_{\rm ind}^{\rm s}$, admits a clean interpretation for each value $a\in \mathcal A$ under additional assumptions we introduce below. Without additional assumptions, this term is a type of ``indirect'' effect that contains the product of the difference in conditional probabilities, $\pi_{1}(a)-\pi_{0}(a)$, and a term that confounds the average partial causal effect of $A$ moving from $0$ to $a$ on $Y$, with selection that arises from the distinct conditioning sets $\{D=0,A=a\}$ and $\{D=0,A=0\}$. Finally, the term in \eqref{eq:short-sel-1}, $ \Delta_{\rm sel}^{\rm s}$, is a selection term that captures the fact that $Y(0,a)$ may not be independent of $(D,A)$. 

Three points are worth highlighting. First, the decomposition above does not rely on either Assumption \ref{ass:basic-SOO} or Assumption \ref{ass:joint-SOO}. While Assumption \ref{ass:basic-SOO} (ignoring the $X$) ensures that
\begin{equation}\label{eq:short_under_A31}
  \Delta_{\rm short} = E[Y|D=1]- E[Y|D=0] = E[Y(1)-Y(0)]~,
\end{equation}
where \( Y(d) \) denotes the pooled potential outcomes defined in \eqref{eq:pooled}, it is not sufficient to interpret \( \Delta_{\rm short} \) as an average direct causal effect or as a parameter satisfying strong sign preservation. In particular, the two endogeneity channels—\( \Delta_{\rm ind}^{\rm s} \) and \( \Delta_{\rm sel}^{\rm s} \)—in the decomposition of \( \Delta_{\rm short} \) may be positive or negative and, crucially, may cause \( \Delta_{\rm short} \) to have the opposite sign of \( \Delta_{\rm dce}^{\rm s} \). Second, these two endogeneity channels are conceptually distinct. The components in \( \Delta_{\rm ind}^{\rm s} \) reflect indirect pathways and are generally difficult to eliminate, while the selection term \( \Delta_{\rm sel}^{\rm s} \) becomes zero under Assumption \ref{ass:joint-SOO}. Finally, under a given set of assumptions, the decomposition we derive for \( \Delta_{\rm short} \)—as well as for the other estimands analyzed in subsequent sections—is unique up to the choice of conditioning set (in our case, \( \Omega = \{D=d, A=a\} \)) and the normalization (here, the indirect effect is defined relative to \( \mu(0,0) \)). Once Assumption \ref{ass:joint-SOO} is imposed, the conditioning event becomes irrelevant, and the only remaining degree of freedom lies in the choice of normalization for the indirect effect.

Under Assumption \ref{ass:joint-SOO} the three terms entering the decomposition for $\Delta_{\rm short}$ simplify in the following way: $\Delta_{\rm sel}^{\rm s} = 0$ and
\begin{align}
  \Delta_{\rm dce}^{\rm s} &= \sum_{a\in \mathcal A} \pi_{1}(a)  (\mu(1,a)-\mu(0,a))\label{eq:short-dce-2}\\
  \Delta_{\rm ind}^{\rm s} &= \sum_{a\in \mathcal A}(\pi_{1}(a)-\pi_{0}(a))(\mu(0,a)-\mu(0,0))\label{eq:short-ind-2}.
\end{align}
That is, the selection term $\Delta_{\rm sel}^{\rm s}$ is no longer present, and the direct effect $\Delta_{\rm dce}^{\rm s}$ and indirect effect $\Delta_{\rm ind}^{\rm s}$ are now a function of the unconditional expectations $\mu(d,a)$. Importantly, the term $\Delta_{\rm ind}^{\rm s}$ is still part of the decomposition since Assumption \ref{ass:joint-SOO} does not restrict how $A$ may affect outcomes, so that $\mu(0,a)-\mu(0,0)\ne 0$, nor does it affect how the main action may affect the other ones, so that $\pi_{1}(a)-\pi_{0}(a)\ne 0$. Aside from removing the term capturing selection bias, Assumption \ref{ass:joint-SOO} also delivers a clean interpretation of the indirect effects captured by $ \Delta_{\rm ind}^{\rm s}$. Each summand in $ \Delta_{\rm ind}^{\rm s}$ contains the average partial causal effect of $A$ moving from $0$ to $a$ on $Y$, $\mu(0,a)-\mu(0,0)$, multiplied by the difference $\pi_{1}(a)-\pi_{0}(a)$, which admits a causal interpretation of an average direct causal effect of $D$ on $A$ under the additional assumption $A(d)\perp D$. 

We can interpret the terms entering the decomposition of $\Delta_{\rm short}$ in \eqref{eq:short-decomposition} in the context of the examples we introduced in Section \ref{sec:notation}. For example, in the farming example where $Y$ is crop yield, $D$ is an indicator of the use of fertilizer, and $A$ is, for simplicity, an indicator of high water usage. In this setting, $\Delta_{\rm dce}^{\rm s}$ captures the average direct causal effect of using fertilizer on the crop yield, where the effect weights units with high and low water usage according to the respective probabilities of these actions happening for the treated, $\pi_1(a)$. The term $\Delta_{\rm ind}^{\rm s}$, in turn, captures a piece of the causal effect of water usage on crop yield that depends on the magnitude of differential water usage between the treated and the untreated. If water usage causally improves crop yield in the absence of fertilizer, and getting an exogenous fertilizer incentivizes units to increase their water usage, this term would be positive.  

The following theorem summarizes our discussion above.

\begin{theorem}\label{thm:short}
Consider the short regression in \eqref{eq:short-regression} and assume $P\{D=d,A=a\}>0$ for all $(d,a)\in \mathcal D\times \mathcal A$. Then, $\Delta_{\rm short}$ can be decomposed as in \eqref{eq:short-decomposition}-\eqref{eq:short-sel-1}. If Assumption \ref{ass:joint-SOO} holds, then $\Delta_{\rm sel}^{\rm s}=0$ and $\Delta_{\rm dce}^{\rm s}$ and $\Delta_{\rm ind}^{\rm s}$ simplify to the expressions in \eqref{eq:short-dce-2} and \eqref{eq:short-ind-2}. 
\end{theorem}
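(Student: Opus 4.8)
The plan is to reduce everything to the elementary identity $\Delta_{\rm short}=E[Y\mid D=1]-E[Y\mid D=0]$, which holds because $D$ is binary and so the least-squares projection in \eqref{eq:short-regression} coincides with the conditional mean (this is the content of $E[U\mid D]=0$ noted there). I would then expand each conditional mean by conditioning further on $A$: by \eqref{eq:obsy}, $Y=Y(d,a)$ on the event $\{D=d,A=a\}$, so
\begin{equation*}
  E[Y\mid D=d]=\sum_{a\in\mathcal A}E[Y(d,a)\mid D=d,A=a]\,\pi_d(a)~,
\end{equation*}
where the positivity assumption $P\{D=d,A=a\}>0$ ensures each conditional expectation is well defined. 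This rewrites $\Delta_{\rm short}$ as $\sum_{a}E[Y(1,a)\mid D=1,A=a]\pi_1(a)-\sum_{a}E[Y(0,a)\mid D=0,A=a]\pi_0(a)$, which is the object to be decomposed.

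The rest of the first claim is a telescoping add-and-subtract argument, carried out in a fixed order. First, inside the $D=1$ sum I would replace $E[Y(1,a)\mid D=1,A=a]$ by $\big(E[Y(1,a)\mid D=1,A=a]-E[Y(0,a)\mid D=1,A=a]\big)+E[Y(0,a)\mid D=1,A=a]$; the first bracket, weighted by $\pi_1(a)$ and summed over $a$, is exactly $\Delta_{\rm dce}^{\rm s}$ as in \eqref{eq:short-dce-1}. Second, in the leftover term $\sum_a E[Y(0,a)\mid D=1,A=a]\pi_1(a)$ I would add and subtract $E[Y(0,a)\mid D=0,A=a]$, peeling off $\sum_a\big(E[Y(0,a)\mid D=1,A=a]-E[Y(0,a)\mid D=0,A=a]\big)\pi_1(a)$, which is $\Delta_{\rm sel}^{\rm s}$ from \eqref{eq:short-sel-1}. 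What remains is $\sum_a E[Y(0,a)\mid D=0,A=a]\,(\pi_1(a)-\pi_0(a))$.

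The only step that is not pure bookkeeping is recognizing that this remainder already equals $\Delta_{\rm ind}^{\rm s}$: since $\pi_1(\cdot)$ and $\pi_0(\cdot)$ are probability mass functions on $\mathcal A$, $\sum_{a\in\mathcal A}(\pi_1(a)-\pi_0(a))=0$, so subtracting the constant $E[Y(0,0)\mid D=0,A=0]$ inside each summand leaves the sum unchanged and produces precisely the expression in \eqref{eq:short-ind-1}. Combining the three pieces gives \eqref{eq:short-decomposition}. For the second claim I would invoke Assumption \ref{ass:strong-SOO} in its conditional-mean form (Remark \ref{rem:mean-assumptions}) with $X$ held fixed throughout, as per the convention adopted at the start of Section \ref{sec:decomposition}: then $E[Y(d,a)\mid D=d',A=a]=\mu(d,a)$ for every $d'$, so the bracket defining $\Delta_{\rm sel}^{\rm s}$ vanishes termwise, giving $\Delta_{\rm sel}^{\rm s}=0$, while $\Delta_{\rm dce}^{\rm s}$ and $\Delta_{\rm ind}^{\rm s}$ collapse to \eqref{eq:short-dce-2} and \eqref{eq:short-ind-2}. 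I do not expect a genuine obstacle; the only thing to be careful about is the order of the add-and-subtract operations, since performing them in a different order generates cross terms that fail to line up with the three named quantities.
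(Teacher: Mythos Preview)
Your argument is correct and is precisely the route the paper takes: its proof of Theorem \ref{thm:short} simply refers back to the derivations in Section \ref{sec:short}, which are exactly the difference-in-means identity, the law of iterated expectations over $A$, and the same add-and-subtract telescoping you spell out (including the $\sum_a(\pi_1(a)-\pi_0(a))=0$ trick to insert the $E[Y(0,0)\mid D=0,A=0]$ baseline). If anything, you have made explicit the bookkeeping the paper leaves to the reader.
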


\begin{remark}\label{rem:short-nsp}
It is important to note that, even under the exogeneity condition in Assumption \ref{ass:joint-SOO}, the parameter $\Delta_{\rm short}$ does not satisfy strong sign preservation as defined in Definition \ref{def:strong-SP}. Indeed, it is certainly possible that $\mu(1,a)-\mu(0,a)>0$ for all $a \in \mathcal A$ and yet $\Delta_{\rm short}<0$ due to $\Delta_{\rm ind}^{\rm s}<-\Delta_{\rm dce}^{\rm s}<0$. This phenomenon, which is reminiscent of the Simpson's paradox, is present in similar ways in other causal inference settings; including two-way fixed effects as in \cite{DeChaisemartin/xavier:20} or the surrogate paradox as in \cite{vanderweele:13}.\footnote{As pointed out by a referee, strong sign preservation would hold under additional assumptions that ensure \( {\rm sign}( \Delta_{\rm ind}^{\rm s})= {\rm sign}(\Delta_{\rm dce}^{\rm s})\). Examples include monotonicity in \( \pi_{d}(a) \) and \( \mu(0,a) \), or linearity assumptions on \( \mu(d,a) \), as we discuss in detail in the next section.} \qed
\end{remark}

\begin{remark}\label{rem:total-vs-partial}
Under Assumption \ref{ass:joint-SOO} and $A(d)\perp D$, $\Delta_{\rm short}$ is a linear combination of average partial causal effects, and it captures a ``total'' effect rather than a ``partial'' effect, as discussed in Remark \ref{rem:ssp-comment}. To understand this, notice that $\Delta_{\rm ind}$ in \eqref{eq:short-ind-2} is the product of $\pi_{1}(a)-\pi_{0}(a)$ and $\mu(0,a)-\mu(0,0)$ for each $a\in\mathcal A$. Both of these terms are partial effects, where $\pi_{1}(a)-\pi_{0}(a)$ is the average partial effect of $D$ on $A$ and $\mu(0,a)-\mu(0,0)$ is the average partial effect of moving $A$ from $0$ to $a$ on the outcome $Y$ for units with $D=0$. With this interpretation, $\Delta_{\rm short}$ captures a total effect of $D$ on $Y$ that adds up the direct effect of $D$ on $Y$, captured by $\Delta_{\rm dce}^{\rm s}$, and the indirect effect that $D$ has on $Y$ via its effect on $A$ and how $A$ affects $Y$. This distinction between partial and total effects mimics the usual one associated with total and partial derivatives in mathematical analysis. Whether total or partial effects are most relevant will depend on the application at hand; see, for example, \citet{Manski:97,heckman:00,imai2010,glynn2012product}. That said, the premise of this paper is that regression-based estimands like $\Delta_{\rm short}$ are often used with the goal of recovering an average direct causal effect, as defined in Definition \ref{def:DCE}. Our objective is therefore to clarify the conditions under which such interpretations are valid. \qed
\end{remark}

Assumption \ref{ass:basic-SOO} suffices to interpret a difference-in-means as the total (average) treatment effect $E[Y(1) - Y(0)]$, but it is generally insufficient to remove selection effects–see Example \ref{ex:basic_with_selection}. Moreover, as emphasized in the empirical mediation literature, direct and indirect effects cannot be separately identified in randomized experiments without additional assumptions; see \citet{robins1992identifiability}. Similar recognition has emerged in applied work, particularly in development economics---for example, \citet{Mel/etal:2009}, \citet{Duflo/etal:2011}, and \citet{Beaman/etal:2013}. For these reasons, we prioritize results derived under Assumption \ref{ass:joint-SOO}, which enables clean interpretations of regression-based estimands in terms of direct and indirect causal effects.

\subsection{Long regression}\label{sec:long}
A seemingly natural, and certainly popular, way to mitigate the presence of indirect effects and obtain an estimand that satisfies strong sign preservation is to control for the other actions linearly as in \eqref{eq:long-regression}; an approach we called the long regression. Our main result below shows that the slope coefficient $\Delta_{\rm long}$ in \eqref{eq:long-regression} admits a decomposition similar to that derived by $\Delta_{\rm short}$, and thus includes a combination of direct effects and indirect effects. However, except in some special cases, the coefficients multiplying each average partial causal effect, as in Definition \ref{def:PCE}, could be negative and so $\Delta_{\rm long}$ may fail to satisfy strong sign preservation  even in the absence of indirect effects. We formalize this below and provide a proof in Appendix \ref{app:proofs}. 

\begin{theorem}\label{thm:long}
Let Assumption \ref{ass:joint-SOO} hold and assume that $P\{D=d,A=a\}>0$ for all $(d,a)\in \mathcal D\times \mathcal A$ and that the covariance matrix of $(D,A)$ is positive definite. Then, the coefficient $\Delta_{\rm long}$ in \eqref{eq:long-regression} admits the decomposition
\begin{equation}\label{eq:long-decomposition}
\Delta_{\rm long} = \Delta_{\rm dce}^{\rm l}+\Delta_{\rm ind}^{\rm l}~,
\end{equation}
where
\begin{align}
\Delta_{\rm dce}^{\rm l}& \equiv  \sum_{a\in \mathcal{A}}\omega^{\rm l}_{\rm dce}(a)(\mu (1,a)-\mu (0,a)) \label{eq:long-dce} \\
\Delta_{\rm ind}^{\rm l}& \equiv  \sum_{a\in \mathcal{A}}\omega^{\rm l}_{\rm ind}(a)(\mu (0,a)-\mu (0,0))~,\label{eq:long-ind}
\end{align}
and $\{\omega^{\rm l}_{\rm dce}(a):a\in \mathcal{A}\}$ and $\{\omega^{\rm l}_{\rm ind}(a):a\in \mathcal{A}\}$ are as defined in Theorem \ref{thm:long_pre} and satisfy $\sum_{a\in \mathcal{A}}\omega^{\rm l}_{\rm dce}(a)=1$ and $\sum_{a\in \mathcal{A}}\omega^{\rm l}_{\rm ind}(a)=0$. Furthermore, the following statements are equivalent:
\begin{enumerate}[(a)]
\item $A$ are mutually exclusive binary variables, i.e., $\mathcal{A}_{j}=\{ 0,1\} $ for $j=1,\ldots ,K$ and $A_{j}A_{l}=0$ for all $j,l=1,\ldots ,K$ with $j\not=l$.
\item For any distribution of $( A,D) $, $\omega^{\rm l}_{\rm dce}(a)\geq 0$ for all $a\in \mathcal{A}$.
\item For any distribution of $( A,D) $, $\omega^{\rm l}_{\rm ind}(a)=0$ for all $a\in\mathcal A$.
\item For any distribution of $( A,D) $, $\Delta_{\rm long}$ satisfies strong sign preservation.
\end{enumerate}
\end{theorem}

Theorem \ref{thm:long} shows that $\Delta_{\rm long}$ can be decomposed into direct and indirect effects, but it leaves open the possibility that the coefficients entering each of these terms could, in general, be negative. An immediate implication is that, except in the special case where the actions in $A$ are all mutually exclusive binary variables, which includes the case where $A$ is a scalar binary variable as a special case, the term $\Delta_{\rm dce}^{\rm l} $ could be negative even if $\mu(1,a)-\mu(0,a)>0$ for all $a\in\mathcal A$. This is because $\omega _{\rm dce}(a)$ may be negative for some $a\in \mathcal{A}$. 
As a result, $\Delta_{\rm long}$ generally does not satisfy strong sign preservation for the following two reasons. First, it may be possible that $\Delta_{\rm ind}^{\rm l}<-\Delta_{\rm dce}^{\rm l}$, so that the indirect effect dominates the direct effect. This phenomenon is the same as the one we discussed for the short regression. Second, even in the absence of indirect effects, where $\Delta_{\rm ind}^{\rm l}=0$, the term $\Delta_{\rm dce}^{\rm l}$ could be negative by itself even if $\mu(1,a)-\mu(0,a)>0$ for all $a\in\mathcal A$, due to $\omega _{\rm dce}(a)<0$ for some $a\in\mathcal A$. This second possibility represents a stark distinction between the long regression estimand, $\Delta_{\rm long}$, and the short regression estimand, $\Delta_{\rm short}$, in the sense that $\Delta_{\rm long}$ does not even measure a total causal effect of $D$ on $Y$ without additional assumptions, cf.\ Remark \ref{rem:total-vs-partial}. It is also important to emphasize that this result does not depend on the distribution of $Y$ given $(A, D)$. As a consequence, both $\Delta_{\rm long}$ and $\Delta_{\rm dce}^{\rm l}$ can be arbitrarily negative—even if $\mu(1,a) - \mu(0,a) > 0$ for all $a \in \mathcal{A}$—when the actions are not mutually exclusive and binary. 

We emphasize that the results in Theorem \ref{thm:long} are all \emph{equivalent}. Obtaining an easy-to-interpret characterization of $\Delta_{\rm long}$ in terms of partial effects when the actions are all binary and mutually exclusive should not come as particularly surprising given related results on the properties of this estimand in different but related settings \citep[see, for example,][]{zhao/ding:22,goldsmithpinkham/etal:22}. To the best of our knowledge, however, a novel lesson of Theorem \ref{thm:long} is that this is precisely \emph{the only way} to guarantee a proper weighted average representation for $\Delta_{\rm long}$ for any distribution of $( A,D)$. This result could also be interpreted in the context of ``bad controls'' discussed in \cite{angrist2008mostly}, as it clarifies the unique way in which seemingly bad controls could actually become ``good controls''.

\begin{remark}\label{rem:strong-to-basic-1}
  The role of Assumption \ref{ass:joint-SOO} in our decompositions is purely to simplify the expressions and facilitate their interpretation. Removing this assumption leads to a decomposition of $\Delta_{\rm long}$ that differs from Theorem \ref{thm:long} in three key ways. First, the term $\Delta_{\rm dce}^{\rm l}$ becomes a linear combination of expectations that condition on $\Omega = \{D=1, A=a\}$. Second, the interpretation of $\Delta_{\rm ind}^{\rm l}$ becomes more convoluted, for the same reasons discussed earlier in the context of $\Delta_{\rm ind}^{\rm s}$. Third, the decomposition includes an additional selection term that is conceptually identical to $\Delta_{\rm sel}^{\rm s}$ in the short regression case. Moreover, without Assumption \ref{ass:joint-SOO}, the equivalence between (a), (b), and (c) in Theorem \ref{thm:long} remains valid. However, in this case, the selection effect that appears in the decomposition breaks down the equivalence between (d) and the other items. For details on the decomposition of $\Delta_{\rm long}$ without imposing Assumption \ref{ass:joint-SOO}, see Theorem \ref{thm:long_pre} in Appendix \ref{app:proofs}. \qed
\end{remark}

The possibility of \( \omega^{\rm l}_{\rm dce}(a) \) being negative for some \( a \in \mathcal A \) does not depend on pathological data-generating processes. It can occur in relatively simple settings with reasonable distributions for \( (A,D) \). This raises a significant concern for the use of linear-in-\( A \) regressions, as they can lead to results that are difficult to interpret and, in many cases, offer no improvement over the short regression in \eqref{eq:short-regression}. Below we illustrate this situation with two canonical simple cases: one where $A$ is a scalar random variable taking multiple values, and another one where $A = (A_1,A_2)$ with $A_1$ and $A_2$ being binary. 

Consider first the case where $A=A_1$ is a scalar random variable taking values in $\mathcal A_1 = \{0,1,2,\dots,\bar{\rm a}_1\}$. The regression in \eqref{eq:long-regression} simplifies to 
\begin{equation*}\label{eq:long-regression-scalar}
  Y = \Delta_{\rm long} D + \theta_0 + \theta_1 A_1 + V~.
\end{equation*}
Theorem \ref{thm:long_pre} provides general closed-form expressions for $\{\omega^{\rm l}_{\rm dce}(a):a\in\mathcal A\}$ and $\{\omega^{\rm l}_{\rm ind}(a):a\in\mathcal A\}$ that, when applied to this specific example, lead to   
\begin{equation}\label{eq:omega-d-a-1}
    \omega^{\rm l}_{\rm dce}(a) \propto \left(\pi_1(a)- \frac{\cov(D,A_1)(a-E[A_1])}{\var(A_1)(1-p)} \right) ~,
\end{equation}
where $p=P\{D=1\}$. From this expression, it follows that any distribution of $(A,D)$ for which 
$\cov(D,A_1)(a-E(A_1))>\var(A_1)(1-p)\pi_1(a)$, would lead to negative weights. For example, consider the case where $p=0.8$, $\bar{\rm a}_1=3$, $\{ A|D=1\} \sim {\rm Bi}(3,0.8) $, and $\{ A|D=0\} \sim {\rm Bi}(3,0.2) $, where ${\rm Bi}(n,\pi)$ denotes a Binomial distribution with $n$ trials and success probability $\pi$. In this case, $\omega^{\rm l}_{\rm dce}(3)=-0.41<0$. Figure \ref{fig:weights} plots the weights $\omega_{\rm dce}^{\rm l}(a)$ as a function of $p$ and shows that $\omega^{\rm l}_{\rm dce}(3)$ is negative for any $p>0.4$ in this example. 

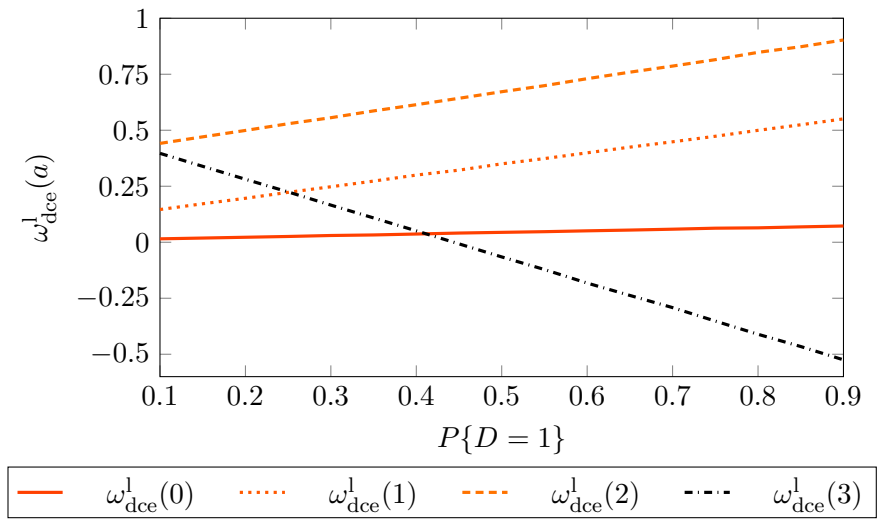
\begin{figure}[b!]
\centering
 
\pgfplotsset{ 
cycle list={
{draw=black, solid,very thick,orange!50!red},
{draw=black,dotted,very thick,orange!70!red}, 
{draw=black,densely dashed,very thick,orange!90!red},
{draw=black, dashdotted, very thick,black},
{only marks, mark=asterisk}}}

    \begin{tikzpicture}[baseline]
         \begin{axis}[width=4.2in,height = 2.5in,xmin=0.1, xmax=0.9,ymin=-0.6,ymax=1,ytick={-0.5,-0.25,0,0.25,0.5,0.75,1},xlabel={$P\{D=1\}$},ylabel={$\omega^{\rm l}_{\rm dce}(a)$}]
              \addplot  table[y = {w1}]{plotdata.dat};
              \addplot  table[y = {w2}]{plotdata.dat};
              \addplot  table[y = {w3}]{plotdata.dat};
              \addplot  table[y = {w4}]{plotdata.dat};          
         \end{axis}
    \end{tikzpicture}

    \begin{tikzpicture}
        \begin{customlegend}[legend columns=4,legend style={align=center,column sep=3ex},legend entries={{$\omega^{\rm l}_{\rm dce}(0)$},{$\omega^{\rm l}_{\rm dce}(1)$}, {$\omega^{\rm l}_{\rm dce}(2)$},{$\omega^{\rm l}_{\rm dce}(3)$}}]
         \addlegendimage{solid,very thick,orange!50!red}
        \addlegendimage{dotted,very thick,orange!70!red}
        \addlegendimage{densely dashed,very thick,orange!90!red}   
        \addlegendimage{dashdotted, very thick,black}
        \end{customlegend}
    \end{tikzpicture}

    \caption{\footnotesize{Weights $\omega_{\rm dce}^{\rm l}(a)$ as a function of $p$ when $\{ A|D=1\} \sim {\rm Bi}(3,0.8) $, and $\{ A|D=0\} \sim {\rm Bi}(3,0.2) $}}
\label{fig:weights}
\end{figure}

Next, consider the case where $A=(A_1,A_2)$ with $A_1$ and $A_2$ both being binary variables, so that $\mathcal A = \{(0,0),(1,0),(0,1),(1,1)\}$. The regression in \eqref{eq:long-regression} simplifies to 
\begin{equation*}
  Y = \Delta_{\rm long} D + \theta_0 + \theta_1 A_1 + \theta_2 A_2 + V~.
\end{equation*}
The closed-form expressions for $\{\omega^{\rm l}_{\rm dce}(a):a\in\mathcal A\}$ and $\{\omega^{\rm l}_{\rm ind}(a):a\in\mathcal A\}$ derived in Theorem \ref{thm:long_pre} also simplify to this case and lead to simple conditions for which   
\begin{equation}\label{eq:opposite-weights}
  \omega^{\rm l}_{\rm dce}(1,1) = - \omega^{\rm l}_{\rm dce}(1,0)~.
\end{equation}
That is, whenever one of the average partial causal effects gets a positive weight, the other necessarily gets a negative one. As an illustrative example, when $\cov(A_1,A_2)=0$,
\begin{equation}
  P\{D=1\}=P\{A_2=1\} = \frac{1}{2}, \quad P\{A_1=1\mid D=1\} = 2P\{A_1=1\}~,
\end{equation}
and 
\begin{equation}
  P\{A_1=A_2=1 \mid D=1\} = P\{A_1=1,A_2=0 \mid D=1\}=\frac{1}{4}~. 
\end{equation}
Using the expressions in Theorem \ref{thm:long_pre}, we obtain $\omega^{\rm l}_{\rm dce}(1,0)=-\omega^{\rm l}_{\rm dce}(1,1)=-0.30$, which one more time illustrates that negative weights arise naturally in settings with non-pathological DGPs. In the proof of Theorem \ref{thm:long}, we present even simpler counter-examples that also illustrate how the weights $\{\omega^{\rm l}_{\rm ind}(a):a\in\mathcal A\}$ are generally non-zero and potentially negative, as well as how $\omega^{\rm l}_{\rm dce}(a)$ may be negative without necessarily satisfying \eqref{eq:opposite-weights}.

\begin{remark}\label{rem:long-nsp}
It is important to note that, even under the stronger exogeneity condition in Assumption \ref{ass:joint-SOO}, the parameter $\Delta_{\rm long}$ does not generally satisfy strong sign preservation as defined in Definition \ref{def:strong-SP}. Indeed, it is certainly possible that $\mu(1,a)-\mu(0,a)>0$ for all $a \in \mathcal A$ and yet $\Delta_{\rm long}<0$ due to either $\Delta_{\rm ind}^{\rm l}<-\Delta_{\rm dce}^{\rm l}<0$ or simply $\Delta_{\rm dce}^{\rm l}<0$ because of negative weights $\{\omega^{\rm l}_{\rm dce}(a):a\in\mathcal A\}$. This second condition implies that \emph{not even} $\Delta_{\rm dce}^{\rm l}$ satisfies strong sign preservation and thus, in general, $\Delta_{\rm long}$ does not offer much of a benefit relative to $\Delta_{\rm short}$, as $\Delta_{\rm short}$ can at least be interpreted as a total effect, as discussed in Remark \ref{rem:total-vs-partial}.  \qed
\end{remark}

\begin{remark}\label{rem:applications-long}
As discussed in Remark \ref{rem:glynn}, the long regression in \eqref{eq:long-regression} is used extensively in the social sciences and the mediation literature. In economics, \citet[][Eq.\ (6)]{heckman2013understanding} consider a long regression in the context of a more restrictive model for potential outcomes that is linear and separable in $(d,a)$. More recently, \citet[][Eq.\ (7)]{fagereng2021wealthy} use the same mediation model from \cite{heckman2013understanding}, in combination with the long regression in \eqref{eq:long-regression}, to disentangle the average causal effect on outcomes into direct and indirect effects. The ``direct'' causal interpretation assigned to the estimands in this last set of papers is correct under the modeling assumptions for potential outcomes, despite both applications involving actions $A$ that are multidimensional and non-mutually exclusive. Our results, however, imply that the main conclusions from such an analysis delicately rely on a linear model for $\mu(d,a)$ and do not generally extend to nonlinear ones.\qed
\end{remark}

The results in Theorem \ref{thm:long} are, to the best of our knowledge, novel—though they are certainly related in spirit to a large literature in biostatistics and epidemiology concerned with mediators and indirect effects. Given the breadth of that literature, we highlight a few representative contributions rather than attempting an exhaustive review.\footnote{Two illustrative examples that were suggested by a referee are \citet{groenwold/etal:2021} and \citet{wilcox/weinberg/basso:2011}. The first paper presents a decompositions of certain regression estimands similar in form to ours, but do so under the assumption that all variables are linearly related to one another—an assumption that, as we show below, has strong implications for interpretation. The second paper warns against including certain variables (such as gestational age) as controls in regression analyses, noting the potential for collider bias. Their critique is motivated by a specific empirical context and supported by simulation, but they do not offer a formal decomposition nor do they analyze the unadjusted case (analogous to our short regression).} While our results are thematically aligned with several ideas in this literature, we are not aware of prior work that establishes the same characterizations with the level of formality and generality presented here. A key contribution of our paper is to show that the case in which the components of $A$ are binary and mutually exclusive is essentially the \emph{only} one in which the long regression can be interpreted as capturing direct causal effects—without requiring additional assumptions such as linearity. This insight, which is formalized in Theorem \ref{thm:long}, sharpens and unifies a number of informal arguments that appear across distinct literatures.

Perhaps the most relevant work for our purposes is \citet{imai2010}, who study the interpretation of the long regression—originally popularized by \citet{baron1986moderator}—under a linear potential outcome model and an assumption they call sequential ignorability. We restate that assumption below to facilitate comparison.

\begin{assumption}[Sequential Ignorability]\label{ass:seq-igno}
Let $A(d)$ denote the potential outcome for $A$ and assume that: (i) $(Y(d',a),A(d)) \perp D | X=x$, and (ii) $Y(d',a)\perp A(d) | D=d, X=x$, both for $d,d'=0,1$ and all $x$. 
\end{assumption}
The results in \cite{imai2010} about the long regression in \eqref{eq:long-regression} invoke (a) sequential ignorability, (b) a scalar random variable $A$ (though not necessarily binary), and (c) a linear model for $\mu(d,a)$ in $(d,a)$. Under these conditions (a)-(c), and ignoring the $X$ for simplicity, \citet[][Theorem 2]{imai2010} shows that $\Delta_{\rm long}$ identifies $\bar\zeta=\bar\zeta(1)=\bar\zeta(0)$ where $\bar\zeta(d) \equiv E[Y(1,A(d))] - E[Y(0,A(d))] =  \sum_{a\in \mathcal A}(\mu(1,a)-\mu(0,a))\pi_{d}(a)$, and the equality follows from Assumption \ref{ass:seq-igno} implying $ Y(d',a)\perp A(d)$; see Lemma \ref{lem:linear-mu} in Appendix \ref{app:aux-lemmas}. The linear model for $\mu(d,a)$ implies that the difference $\mu(1,a)-\mu(0,a)$ does not depend on the value of $a$, and so it is just a constant that we can denote by $\bar \zeta$ without loss of generality. The fact that  $\bar\zeta=\bar\zeta(1)=\bar\zeta(0)$ then follows from $\sum_{a\in \mathcal A}\pi_{d}(a)=1$ for $d\in \{0,1\}$. 

The additional assumptions (a)-(c) mentioned above have implications on the conclusions of Theorem \ref{thm:long}, which does not invoke any of these assumptions. In particular, the linearity of $\mu(d,a)$ implies that $\Delta^{\rm l}_{\rm dce}$ in \eqref{eq:long-dce} equals $\bar \zeta \sum_{a\in\mathcal A}\omega^{\rm l}_{\rm dce}(a)=\bar \zeta$ by the weights adding up to one according to Theorem \ref{thm:long}. The same linearity assumption also implies that 
$$ \Delta_{\rm ind}^{\rm l} \equiv  \sum_{a\in \mathcal{A}}\omega^{\rm l}_{\rm ind}(a)(\mu (0,a)-\mu (0,0)) \propto  \sum_{a\in \mathcal{A}}\omega^{\rm l}_{\rm ind}(a)a = 0~,$$
where the last equality follows from Theorem \ref{thm:long_pre} in the appendix. We conclude that Theorem \ref{thm:long} coincides with the results in \cite{imai2010} in delivering $\Delta_{\rm long}$ being equal to $\bar\zeta$ under the additional assumption that $\mu(d,a)$ is linear in $(d,a)$. This means that, while sequential ignorability is a stronger assumption than Assumption \ref{ass:joint-SOO} (we prove this claim in Lemma \ref{lem:seq-igno} in Appendix \ref{app:aux-lemmas}), the main driving force of this result is the linear model for the potential outcomes or, equivalently, the linear model for $\mu(d,a)$.\footnote{Sequential ignorability without a linear model has also been used in the literature on causal mediation analysis to construct semi-parametrically efficient estimators; see \cite{tchetgen/etal:12}.} As we discussed in Remark \ref{rem:applications-long}, this linearity assumption has been used in economic applications, e.g., \cite{heckman2013understanding,fagereng2021wealthy}, and while it imposes enough restrictions to provide a clean interpretation to the coefficient $\Delta_{\rm long}$, our results show that such a clean interpretation generally breaks down when $\mu(d,a)$ is not linear in $(d,a)$.

\begin{remark}\label{rem:Imai-indirect-effects}
We note that while Theorem \ref{thm:long} is a result on how to properly interpret $\Delta_{\rm long}$ in the context of a long regression, \citet[][Theorem 2]{imai2010} is a result on the identification of \emph{natural} indirect effects via the same type of regression and the additional conditions (a)-(c) above. Related to our discussion in Remark \ref{rem:controlled-effects}, the \emph{natural} indirect effect does not coincide with our notion of indirect effect in Theorem \ref{thm:long}. To see the difference, note that the \emph{natural} indirect effect, defined as $\bar\delta(d) = E[Y(d,A(1)) - Y(d,A(0))]$, can be written as  
\begin{equation}\label{eq:bar-delta}
   \bar \delta(d) = \sum_{a\in \mathcal A}(\mu(d,a)-\mu(d,0))(\pi_{1}(a) - \pi_{0}(a))~,  
\end{equation}
under sequential ignorability, and is distinct from $\Delta^{\rm l}_{\rm ind}$ in \eqref{eq:long-ind} because $\omega^{\rm l}_{\rm ind}(a) \ne \pi_{1}(a) - \pi_{0}(a)$. Conceptually, the literature on mediation analysis defines an indirect effect as a target parameter and then determines conditions under which such indirect effects could be identified from the data. In contrast, we characterize the decomposition of estimands in terms of average direct causal effects, as defined in Definition \ref{def:DCE}, and then group the remaining terms as indirect or selection terms, depending on the case. We note, however, that our indirect effects coincide with those characterized by $\bar \delta(d)$ in the case of the short regression, i.e., $\Delta^{\rm s}_{\rm ind}$ in \eqref{eq:short-ind-2} equals $\bar \delta(0)$. \qed
\end{remark}

\begin{remark}
Under the conditions of Theorem \ref{thm:long} and assuming that $\mu(d,a)$ is linear, it follows that $\Delta_{\rm ind}^{\rm l} = 0$, and so $\Delta_{\rm long}$ captures the direct causal effect. In this case, $\Delta_{\rm short}$ equals the direct effect plus the indirect effect, justifying the common practice of using $\Delta_{\rm short} - \Delta_{\rm long}$ to estimate the indirect effect. However, Theorems \ref{thm:short} and \ref{thm:long} show that this interpretation fails when $\mu(d,a)$ is nonlinear. Then, (i) the direct effects identified by the two regressions may differ, and (ii) $\Delta_{\rm long}$ may itself include an indirect component if actions are not mutually exclusive and binary. As a result, $\Delta_{\rm short} - \Delta_{\rm long}$ no longer isolates the indirect effect. \qed
\end{remark}

\begin{remark}\label{rem:kaufman04}
Our findings conceptually align with prior critiques in the literature on mediation analysis regarding the limitations of interpreting regression coefficients as causal direct effects. In particular, \citet{kaufman2004critique} provide an early warning against using covariate adjustment to infer biological mediation. Using a potential outcomes framework, they show that even in randomized experiments, adjusting for post-treatment variables can yield misleading inferences unless strong and often unrealistic assumptions hold—such as no unmeasured mediator-outcome confounding or no interaction at the unit level. \citet{kaufman2009gilding} further advocates for grounding mediation analysis in formal causal estimands, such as controlled direct effects, rather than relying on statistical associations. These contributions reinforce our argument that regression-based estimands like $\Delta_{\rm short}$ and $\Delta_{\rm long}$ may not recover interpretable causal effects in the absence of strong identifying assumptions. \qed
\end{remark}


\subsection{Long regression with interactions}\label{sec:longInteractions}

A common variant of the long regression additionally includes interactions between the \(K\) actions, \(A_{1}, \dots, A_{K}\), and the treatment \(D\); that is, the slope coefficient \(\Delta_{\rm inter}\) in \eqref{eq:inter-regression}. We refer to this as the long regression with interactions. Our result below shows that the slope coefficient \(\Delta_{\rm inter}\) in \eqref{eq:inter-regression} admits a decomposition with the same shortcomings as the one we derived for \(\Delta_{\rm long}\), including the possibility of \(\Delta_{\rm inter}\) not satisfying strong sign preservation, even in the absence of indirect effects. We formalize this below and provide a proof in Appendix \ref{app:proofs}.

\begin{theorem}\label{thm:inter} 
Let Assumption \ref{ass:joint-SOO} hold and assume that $P\{D=d,A=a\}>0$ for all $(d,a)\in \mathcal D\times \mathcal A$ and that the covariance matrix of $(D,A,AD)$ is positive definite. Then, the coefficient $\Delta_{\rm inter}$ in \eqref{eq:inter-regression} admits the decomposition
\begin{equation}\label{eq:inter-decomposition}
\Delta_{\mathrm{inter}} = \Delta_{\rm dce}^{\rm i}+\Delta_{\rm ind}^{\rm i}~,
\end{equation}
where 
\begin{align*}
\Delta_{\rm dce}^{\rm i}& \equiv \sum_{a\in \mathcal{A}}\omega^{\rm i} _{\rm dce}(a)(\mu (1,a)-\mu (0,a)) \\
\Delta_{\rm ind}^{\mathrm{i}}& \equiv \sum_{a\in \mathcal{A}}\omega^{\rm i} _{\rm ind}(a)(\mu (0,a)-\mu (0,0))~,
\end{align*}
and $\{\omega^{\rm i}_{\rm dce}(a):a\in \mathcal{A}\}$ and $\{\omega^{\rm i}_{\rm ind}(a):a\in \mathcal{A}\}$ are as defined in Theorem \ref{thm:inter_pre} and satisfy $\sum_{a\in \mathcal{A}}\omega^{\rm i}_{\rm dce}(a)=1$ and $\sum_{a\in \mathcal{A}}\omega^{\rm i}_{\rm ind}(a)=0$. Furthermore, the following statements are equivalent:
\begin{enumerate}[(a)]
\item $A$ are mutually exclusive binary variables, i.e., $\mathcal{A}_{j}=\{0,1\} $ for $j=1,\ldots ,K$ and $A_{j}A_{l}=0$ for all $j,l=1,\ldots ,K$ with $j\not=l$.
\item For any distribution of $( A,D) $, $\omega^{\rm i}_{\rm dce}(a)\geq 0$ for all $a\in \mathcal{A}$.
\item For any distribution of $(A,D)$, $\omega^{\rm i}_{\rm ind}(a)=0 $ for all $a\in \mathcal{A}$.
\item For any distribution of $(A,D)$, $\Delta_{\mathrm{inter}}$ satisfies strong sign preservation.
\end{enumerate}
\end{theorem}

Theorem \ref{thm:inter} is analogous to Theorem \ref{thm:long} and has very similar implications. Except in the special case where the actions in $A$ are all mutually exclusive binary variables, which includes the case where $A$ is a scalar binary variable as a special case, the term $\Delta_{\rm dce}^{\rm i} $ could be negative even if $\mu(1,a)-\mu(0,a)>0$ for all $a\in\mathcal A$. This is because $\omega^{\rm i}_{\rm dce}(a)$ may be negative for some $a\in \mathcal{A}$.\footnote{The possibility of \( \omega^{\rm i}_{\rm dce}(a) \) being negative for some \( a \in \mathcal A \) can occur in simple settings with reasonable distributions for \( (A,D) \); see the proof of Theorem \ref{thm:inter}.} As a result, $\Delta_{\rm inter}$ generally does not satisfy strong sign preservation for the same two reasons that $\Delta_{\rm long}$ fails to do so. First, it is possible that $\Delta_{\rm ind}^{\rm i} < -\Delta_{\rm dce}^{\rm i}$, and second, even if $\Delta_{\rm ind}^{\rm i} = 0$, the term $\Delta_{\rm dce}^{\rm i}$ itself can be negative—despite $\mu(1,a) - \mu(0,a) > 0$ for all $a \in \mathcal{A}$—due to negative weights. Again, this second failure mode distinguishes $\Delta_{\rm long}$ and $\Delta_{\rm inter}$ from $\Delta_{\rm short}$. 

While Theorem \ref{thm:inter} focuses on the properties of the estimand $\Delta_{\rm inter}$ in settings with interactions terms, it is most often the case that the analyst would rather focus on the estimand $\Delta_{\rm inter} + \sum_{j=1}^K \lambda_j E[A_j] $ (or, simply, $\Delta_{\rm inter} + \sum_{j=1}^K \lambda_j a_j $ for given values $a_j$, $j=1,\dots,K$). In Lemma \ref{lem:inter-w-lambda} in Appendix \ref{app:aux-lemmas}, we show that 
\begin{equation*}
\Delta_{\rm inter}+\sum_{j=1}^K \lambda_j E[A_j] = \sum_{a\in \mathcal{A}}\omega_{\mathrm{dce}}^{\mathrm{i}\star}(a)(E[Y(1,a)-Y(0,a)])+ \omega _{\mathrm{ind}}^{\mathrm{i\star}}(a) (E[ Y( 0,a) -Y( 0,0) ])~,
\end{equation*}
where the ``weights'' $\{(\omega_{\mathrm{dce}}^{\mathrm{i}\star}(a),\omega_{\mathrm{ind}}^{\mathrm{i}\star}(a)):a\in\mathcal A\}$ may be negative in general, and thus leading to an estimand with similar properties to those of $\Delta_{\rm inter}$. The one special case where the estimand $\Delta_{\rm inter} + \sum_{j=1}^K \lambda_j E[A_j] $ identifies the intended causal effects is when $\mu(d,a)$ is assumed to take the functional form $\mu(d,a) = \kappa_0 + \kappa_1 d +  \sum_{j=1}^K \kappa_{2,a}a_j +  d\sum_{j=1}^K \kappa_{3,j} a_j$, which is equivalent to assuming that the conditional mean of the observed outcome, $Y$, is correctly specified in the interaction regression in \eqref{eq:inter-regression}. Lemma \ref{lem:linear-mu-inter} in Appendix \ref{app:aux-lemmas} shows that $\Delta_{\rm inter} + \sum_{j=1}^K \lambda_j a_j = \mu(1,a) - \mu(0,a)$ in this case, delivering an average partial causal effect given $a=(a_1,\dots,a_K)$, as defined in Definition \ref{def:PCE}. It follows from these results that a clean interpretation of $\Delta_{\rm inter}$ or $\Delta_{\rm inter} + \sum_{j=1}^K \lambda_j a_j$ in terms of the definitions introduced in Section \ref{sec:causal-effects} essentially depends on a correctly specified parametric model for potential outcomes and does not generally apply to nonlinear models, similarly to our results about the long regression in Section \ref{sec:long}.

\begin{remark}\label{rem:strong-to-basic-2}
The observations made in Remark \ref{rem:strong-to-basic-1} for $\Delta_{\rm long}$ also extend to $\Delta_{\rm inter}$. In particular, one can derive a decomposition of $\Delta_{\rm inter}$ without imposing Assumption \ref{ass:joint-SOO}. Dropping this assumption leads to a decomposition that differs from Theorem \ref{thm:inter} in three key ways. First, the term $\Delta_{\rm dce}^{\rm i}$ becomes a linear combination of expectations conditional on ${D=1, A=a}$. Second, the interpretation of $\Delta_{\rm ind}^{\rm i}$ becomes convoluted, for the same reasons discussed in the context of $\Delta_{\rm ind}^{\rm s}$ and $\Delta_{\rm ind}^{\rm l}$. Finally, the decomposition includes an additional selection term that is conceptually identical to those appearing in $\Delta_{\rm ind}^{\rm s}$ and $\Delta_{\rm ind}^{\rm l}$. As with the long regression case, the equivalence between (a), (b), and (c) in Theorem \ref{thm:inter} remains valid even without Assumption \ref{ass:joint-SOO}. However, in the absence of this assumption, $\Delta_{\rm inter}$ includes the selection term, which breaks the equivalence between (d) and the other items. For details on the decomposition of $\Delta_{\rm long}$ without Assumption \ref{ass:joint-SOO}, see Theorem \ref{thm:inter_pre} in Appendix \ref{app:proofs}. \qed
\end{remark}

\begin{remark}\label{rem:applications-inter}
Similar to the long regression in \eqref{eq:long-regression} that we discussed in Remarks \ref{rem:glynn} and \ref{rem:applications-long}, the interaction regression is used extensively in the mediation literature. In the context of mediation analysis, this variant has been popularized and advocated by \cite{judd/kenny:81,kraemer/etal:2002,kraemer:etal:08}. However, the main goal in that particular setting has been to test for the existence of mediation effects using the estimated coefficients in \eqref{eq:inter-regression}, see \cite{kraemer:etal:08} for details on the proposed test and \cite{imai2010} for a result that shows that, under Assumption \ref{ass:seq-igno}, such a test does not provide evidence in favor or against the parameter $\bar \delta(d)$ in \eqref{eq:bar-delta} being zero. Our results, on the other hand, imply that even in settings where mediation effects are a nuisance and the main goal is to interpret the coefficients directly related to the treatment $D$, the main conclusions depend on the distribution of $(A,D)$. \qed
\end{remark}

\subsection{Strata fixed effects (SFE) regression}\label{sec:sfe}
Theorems \ref{thm:long} and \ref{thm:inter} show that adding other actions linearly in the regression is generally useful only when the actions are binary and mutually exclusive. If we could make the actions binary and mutually exclusive, we would obtain an estimand free from indirect effects and with strong sign preservation. This is possible by considering the slope coefficient \(\Delta_{\rm sfe}\) in \eqref{eq:sfe-regression}, where the regression controls for all possible values of \(A\), i.e., \(\{I\{A=a\}:a\in\mathcal A\}\). We call this a strata-fixed effects regression, due to its connection to the use of strata fixed effects in randomized controlled trials with covariate adaptive randomization; see \cite{bugni/canay/shaikh:2018,bugni/canay/shaikh:2019}. Our result below shows that \(\Delta_{\rm sfe}\) in \eqref{eq:sfe-regression} has a decomposition free from indirect effects (see Appendix \ref{app:proofs} for a proof).

\begin{theorem}\label{thm:sfe}
Let Assumption \ref{ass:joint-SOO} hold and assume that $P\{D=d,A=a\}>0$ for all $(d,a)\in \mathcal D\times \mathcal A$. Let $\pi_d(a)$ be defined as in \eqref{eq:pis}. Then $$\Delta_{\rm sfe}= \sum_{a\in\mathcal A}  \omega_{\rm sfe}(a)(\mu(1,a)-\mu(0,a))~,$$ where the weights $\{\omega_{\rm sfe}(a): a\in \mathcal A\}$ are given by  
\begin{equation}\label{eq:weights-sfe}
  \omega_{\rm sfe}(a) \equiv \frac{\pi_{1}(a)\pi_{0}(a)/p_a}{\sum_{a'\in \mathcal A} \pi_{1}(a')\pi_{0}(a')/p_{a'}}~
\end{equation}
for $p_a\equiv P\{A=a\}$, and satisfy $\sum_{a\in \mathcal{A}}\omega _{\mathrm{sfe}}(a)=1$ and $\omega _{\mathrm{sfe}}(a)\geq 0$. Furthermore, $\Delta_{\rm sfe}$ satisfies strong sign preservation.
\end{theorem}

Theorem \ref{thm:sfe} shows that $\Delta_{\rm sfe}$ identifies an average direct causal effect as in Definition \ref{def:DCE}. Importantly, it does not contain indirect effects and, as a result, $\Delta_{\rm sfe}$ satisfies strong sign preservation as in Definition \ref{def:strong-SP}. The weight $\omega_{\rm sfe}(a)$ admits a simple representation and depends only on the conditional probabilities that the action $a$ happens for the treated and control group, $\pi_{1}(a)$ and $\pi_{0}(a)$. These weights are generally different than the weights associated with the direct effect in the short regression, $\Delta_{\rm dce}^{\rm s}$, which are simply $\pi_{1}(a)$, unless $D$ and $A$ are independent. We emphasize that the result in Theorem \ref{thm:sfe} does not require the other actions, $A$, to be singled-valued or mutually exclusive. 

The estimand $\Delta_{\rm sfe}$ has been studied in other settings. For example, \cite{angrist:98} studies regressions of $Y$ on $D$ under Assumption \ref{ass:basic-SOO} and considers a linear regression that saturates on the covariates, $X$, to obtain an expression that parallels the one in Theorem \ref{thm:sfe}; see \citet[][footnote 11]{angrist:98}. Indeed, Assumption \ref{ass:joint-SOO} implies that $Y(d,a)\perp D|A$ which makes the connection to \cite{angrist:98} immediate. In this sense, our result illustrates another instance where stratifying confounding variables leads to easy-to-interpret results (though, it is important to notice that this result relies on the fact that the treatment variable $D$ in our case is binary, c.f. \cite{goldsmithpinkham/etal:22}). As another example, \cite{bugni/canay/shaikh:2018,bugni/canay/shaikh:2019} study the properties of this estimand and present results on how to properly compute standard errors in randomized controlled experiments with covariate adaptive randomization. These papers, however, do not represent $\Delta_{\rm sfe}$ as a weighted average of causal effects since the strata are not viewed as counter-factual features in such experiments.

\subsection{Saturated (SAT) regression}\label{sec:sat}

We now turn our attention to the last set of estimands we study in this paper; the slope coefficient $\Delta_{\rm sat}$ in \eqref{eq:sat-regression}. As we have stated in the introduction, under Assumption \ref{ass:joint-SOO}, it follows that $\mu(d,a)$ is immediately identified from $E[Y|D=d,A=a]$ for any $d\in \{0,1\}$ and $a\in\mathcal A$ and so identification of any contrast of means $\mu(d,a)$ is straightforward. From this, it immediately follows that the same result could be achieved by running a saturated regression, as in \eqref{eq:sat-regression}, that we re-write here for readability,
$$ Y =  \sum_{a\in\mathcal A} \gamma(a)I\{A=a\}+ \sum_{a\in\mathcal A} \Delta_{\rm sat}(a)I\{A=a\}  D + \epsilon~.$$
Under Assumption \ref{ass:joint-SOO}, standard results on saturated regressions imply that $\Delta_{\rm sat}(a)=\mu(1,a)-\mu(0,a)$ for all $a\in\mathcal A$, and so $\Delta_{\rm sat}(a)$ captures an average causal partial effect of $D$ on $Y$ for each value of the other actions, $a\in\mathcal A$, aligned with Definition \ref{def:PCE}. For completeness, we state and prove this result formally in Theorem \ref{thm:sat} in the appendix. The same result also shows that replacing Assumption \ref{ass:joint-SOO} with Assumption \ref{ass:basic-SOO} leads to a decomposition of $\Delta_{\rm sat}(a)$ that includes a selection term, as it was also the case for the other estimands we considered. Finally, we note that this approach is equivalent to a multi-way ANOVA, since the mean for the outcome variable depends exclusively on categorical variables.

\section{Concluding Remarks}\label{sec:conclusion}

In this paper we analyze settings where the researcher seeks to estimate the average \emph{direct} causal effect of a binary treatment \( D \) on an outcome \( Y \), in situations where the outcome is realized with a delay and other endogenous actions \( A \) may occur in the interim. A central premise of our analysis is that researchers often use regression-based estimands with the intent of isolating a direct causal effect of \( D \) on \( Y \), not a total effect that also includes indirect pathways through \( A \).

We focus on several popular estimands arising from regressions that control for \( A \) in various ways—linearly, with interactions, or not at all—and decompose them to assess whether they admit a causal interpretation. Our decompositions are algebraic and do not rely on identification assumptions, but we examine the additional conditions under which causal interpretations are possible. The results are ``negative" in spirit: they show that, even under strong exogeneity assumptions, these commonly used regression-based estimands generally do not recover the average direct causal effect of interest unless we impose restrictive assumptions (e.g., linearity in the potential outcomes model). We highlight that only in the special case where the actions are binary and mutually exclusive does the long or interacted regression estimand admit a causal interpretation of a direct effect without further assumptions—and our result shows that this is an \emph{if and only if} condition.

The goal of our exercise is not to propose new estimators or to discuss nonparametric identification of direct and indirect effects, but rather to clarify the conditions under which standard approaches yield meaningful causal parameters. We hope this work helps practitioners recognize the limitations of common estimands and adopt a more deliberate approach when interpreting them as capturing direct effects.

\bibliography{ate_ref.bib}
\addcontentsline{toc}{section}{References}

\renewcommand{\theequation}{\Alph{section}-\arabic{equation}}

\setcounter{lemma}{0}
\setcounter{theorem}{0}
\setcounter{corollary}{0}
\setcounter{equation}{0}
\setcounter{remark}{0}

\begin{appendix}


\small 

\section{Appendix}

\subsection{Proofs}\label{app:proofs}

\begin{proof}[Proof of Theorem \ref{thm:short}]
This proof follows from derivations in Section \ref{sec:short} and basic algebraic manipulations. 
\end{proof}


\begin{theorem}\label{thm:long_pre}
Consider the long regression in \eqref{eq:short-regression} and assume that $P\{D=d,A=a\}>0$ for all $(d,a)\in \mathcal D \times \mathcal A$.  Assume that the variance-covariance matrix of $(A,D)$ is positive definite and let $M=\cov(D,A) \var(A) ^{-1}$. Then, 
\begin{align}
    \Delta _{\rm long}
    &=\sum_{a\in \mathcal{A}} \omega^{\rm l}_{\rm dce}(a)E[ Y( 1,a) -Y( 0,a)|D=1,A=a] \notag  \\ 
    &+\sum_{a\in \mathcal{A}}\omega^{\rm l}_{\rm ind}(a)( E[ Y( 0,a) |D=0,A=a] -E[ Y( 0,0) |D=0,A=0] )  \notag \\ 
&+\sum_{a\in \mathcal{A}}\omega^{\rm l}_{\rm dce}(a)( E[ Y( 0,a) |D=1,A=a]-E[ Y( 0,a) |D=0,A=a] )~,\label{eq:long_pre1}
\end{align}
where
\begin{align}
\omega^{\rm l}_{\rm dce}(a) & \equiv  \frac{\pi _{1}( a) [ \var(D) -P\{D=1\} \sum_{j=1}^{K}M_j( a_{j}-E[ A_{j}] ) ] }{\var(D) -\cov(D,A) \var(A) ^{-1} \cov(A,D)} \notag\\
\omega^{\rm l}_{\rm ind}(a) & \equiv  \frac{\var(D) [ \pi _{1}( a) -\pi _{0}( a) ] -P\{A=a\} \sum_{j=1}^{K}M_j( a_{j}-E[ A_{j}] ) }{\var(D) -\cov(D,A) \var(A) ^{-1} \cov(A,D) }~,\label{eq:long_pre2}
\end{align}
and $\pi_{d}(a)$ is defined in \eqref{eq:pis}. Furthermore, $\sum_{a\in \mathcal{A}}\omega^{\rm l}_{\rm dce}(a)=1$, $\sum_{a\in \mathcal{A}}a\omega _{\mathrm{ind}}^{\rm l}(a)={\bf 0}$, and $\sum_{a\in \mathcal{A}}\omega^{\rm l}_{\rm ind}(a)=0$.
\end{theorem}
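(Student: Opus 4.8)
The plan is to compute $\Delta_{\rm long}$ from the Frisch--Waugh--Lovell (partitioned regression) representation and then reorganize the resulting expression by the same add-and-subtract steps used for $\Delta_{\rm short}$ in Section \ref{sec:short}. Let $\tilde D \equiv D - L(D\mid 1,A)$ be the residual of the linear projection of $D$ onto a constant and $A$. Since $L(D\mid 1,A) = E[D] + M(A-E[A])$ with $M = \cov(D,A)\var(A)^{-1}$, on the event $\{D=d,A=a\}$ this residual equals the explicit number $\tilde d(a) \equiv (d-E[D]) - \sum_{j=1}^{K} M_j(a_j - E[A_j])$. By FWL, $\Delta_{\rm long} = \cov(Y,\tilde D)/\var(\tilde D) = E[Y\tilde D]/\sigma^2$, and a short computation (using $M\var(A)M^{\prime} = M\cov(A,D) = \cov(D,A)\var(A)^{-1}\cov(A,D)$) identifies $\sigma^2 \equiv \var(\tilde D) = \var(D) - \cov(D,A)\var(A)^{-1}\cov(A,D)$ as the claimed denominator.

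Next I would write, with $m_d(a) \equiv E[Y\mid D=d,A=a] = E[Y(d,a)\mid D=d,A=a]$ and using $P\{D=d,A=a\} = P\{D=d\}\pi_d(a)$,
\[
E[Y\tilde D] = \sum_{a\in\mathcal A}\Big[ \tilde 1(a)\,P\{D=1\}\pi_1(a)\,m_1(a) + \tilde 0(a)\,P\{D=0\}\pi_0(a)\,m_0(a)\Big],
\]
and then expand $m_1(a) = E[Y(1,a)-Y(0,a)\mid D=1,A=a] + \big(E[Y(0,a)\mid D=1,A=a] - m_0(a)\big) + \big(m_0(a) - m_0(0)\big) + m_0(0)$ and $m_0(a) = \big(m_0(a)-m_0(0)\big) + m_0(0)$. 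Substituting and collecting terms, the $m_0(0)$ pieces cancel because $\sum_a\big[\tilde 1(a)P\{D=1,A=a\} + \tilde 0(a)P\{D=0,A=a\}\big] = E[\tilde D] = 0$; the coefficient of the per-stratum direct effect and of the selection term $\big(E[Y(0,a)\mid D=1,A=a]-m_0(a)\big)$ is $\tilde 1(a)P\{D=1\}\pi_1(a)/\sigma^2$, while the coefficient of $\big(m_0(a)-m_0(0)\big)$ is $\big(\tilde 1(a)P\{D=1\}\pi_1(a) + \tilde 0(a)P\{D=0\}\pi_0(a)\big)/\sigma^2$. A direct expansion of $\tilde 1(a),\tilde 0(a)$ together with $\var(D) = P\{D=1\}P\{D=0\}$ and $P\{D=1\}\pi_1(a) + P\{D=0\}\pi_0(a) = P\{A=a\}$ shows these two coefficients equal $\omega^{\rm l}_{\rm dce}(a)$ and $\omega^{\rm l}_{\rm ind}(a)$ exactly as in \eqref{eq:long_pre2}, yielding the decomposition \eqref{eq:long_pre1}.

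For the three adding-up identities, the key observation is that each relevant weight-sum is a moment of the single residual $\tilde D$. Indeed $\sum_a \omega^{\rm l}_{\rm dce}(a) = \sigma^{-2}\sum_a \tilde 1(a)P\{D=1,A=a\} = \sigma^{-2}E[\tilde D\,I\{D=1\}] = \sigma^{-2}E[\tilde D D] = \sigma^{-2}\cov(\tilde D,D) = \sigma^{-2}\var(\tilde D) = 1$, using that $D$ is binary and that $\tilde D$ is orthogonal to $L(D\mid 1,A)$. Likewise $\sum_a \omega^{\rm l}_{\rm ind}(a) = \sigma^{-2}E[\tilde D] = 0$, and for each $j$, $\sum_a a_j\,\omega^{\rm l}_{\rm ind}(a) = \sigma^{-2}E[A_j\tilde D] = \sigma^{-2}\cov(A_j,\tilde D) = 0$ since $\tilde D$ is, by the normal equations defining the projection, orthogonal to every $A_j$; hence $\sum_a a\,\omega^{\rm l}_{\rm ind}(a) = {\bf 0}$.

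I expect the main obstacle to be purely bookkeeping: carrying out the add-and-subtract expansion so that the collected coefficients literally match the closed forms in \eqref{eq:long_pre2}, in particular tracking the factors of $P\{D=1\}$ versus $\var(D)$ and handling the row-vector $M$ and the matrix $\var(A)^{-1}$ cleanly. The conceptual content---that everything reduces to first and second moments of the residual $\tilde D$---is exactly what makes the three adding-up properties immediate, and is the same mechanism already present in the analogous claims for $\Delta_{\rm short}$.
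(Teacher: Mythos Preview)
Your proof is correct and follows essentially the same route as the paper: partial out $(1,A)$ from $D$, express $\Delta_{\rm long}$ as a ratio with denominator $\var(D)-\cov(D,A)\var(A)^{-1}\cov(A,D)$, and then reorganize the numerator by the same add-and-subtract maneuver used for $\Delta_{\rm short}$. The paper reaches the numerator by profiling the normal equations and separately expanding $\cov(D,Y)$ and each $\cov(A_j,Y)$, whereas you bundle these into the single object $E[Y\tilde D]$ via Frisch--Waugh--Lovell; the two computations are algebraically identical. Where your packaging genuinely buys something is in the three adding-up identities: the paper verifies $\sum_a\omega^{\rm l}_{\rm dce}(a)=1$, $\sum_a\omega^{\rm l}_{\rm ind}(a)=0$, and $\sum_a a\,\omega^{\rm l}_{\rm ind}(a)={\bf 0}$ by direct algebraic simplification of the closed forms in \eqref{eq:long_pre2}, while you recognize each sum as a moment of $\tilde D$ ($E[\tilde D D]$, $E[\tilde D]$, $E[A_j\tilde D]$) and invoke the projection's normal equations. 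This is cleaner and makes the identities conceptually obvious rather than computational, though the content is the same.
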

\begin{proof}
Let $\theta \equiv (\theta _{j}:j=1,\dots ,K)$. By properties of projections, 
\begin{equation}
E[ ( 1,D,A^{\prime })' (Y-( \Delta _{\rm long}D+\theta _{0}+\theta ^{\prime }A)) ]  =  {\bf 0}~. \label{eq:long_pre3}
\end{equation}
Profiling $\theta_{0}$ leads to
\begin{align}
\cov(D,Y)  & = \var(D) \Delta _{\rm long}+\cov(A,D)^{\prime }\theta   \label{eq:long_pre4} \\
\cov(A,Y)  & = \cov(A,D) \Delta _{\mathrm{long} }+\var(A) \theta~.\label{eq:long_pre4B}
\end{align}
Since $\cov(D,A) $ is positive definite, $\var(A) $ is positive definite. Then, \eqref{eq:long_pre4B} implies that $\theta =\var(A) ^{-1}( \cov(A,Y) -\cov(A,D) \Delta _{\rm long})$. If we plug this into \eqref{eq:long_pre4}, we get
\begin{align}
( \var(D) - \cov(D,A) \var(A) ^{-1} \cov(A,D) ) \Delta _{ \mathrm{long}} =  \cov(D,Y) -M\cov(A,Y)~.  \label{eq:long_pre6}    
\end{align}
Since $\cov(D,A)$ is positive definite, $\cov(D,A) \var(A) ^{-1} \cov(A,D) >0$, and so \eqref{eq:long_pre6} implies that
\begin{equation}
\Delta _{\rm long}
  =  \frac{\var(D) \Delta_{\rm short}-\sum_{j=1}^{K}M_j\cov(A_{j},Y) }{\var(D) -\cov(D,A) \var(A) ^{-1} \cov(A,D) }~,
\label{eq:long_pre7}
\end{equation}
where we used that $\var(D) \Delta_{\rm short} = \cov(D,Y)$. For any $j=1,\dots ,K$, some algebra shows that
\begin{align}
\cov&(A_{j},Y)  ~=~ \sum_{a\in \mathcal{A}} E[ Y( 1,a) -Y( 0,a) |D=1,A=a] ( a_{j}-E[ A_{j}] ) \pi _{1}( a) E[D] \notag \\
&+ \sum_{a\in \mathcal{A}} ( E[ Y( 0,a) |D=0,A=a] -E[ Y(0,0) |D=0,A=0] ) ( a_{j}-E[ A_{j}] )P\{A=a\} \notag \\
&+\sum_{a\in \mathcal{A}} ( E[ Y( 0,a) |D=1,A=a] -E[ Y(0,a) |D=0,A=a] ) ( a_{j}-E[ A_{j}] )\pi _{1}( a)E[D]~. \label{eq:long_pre8}
\end{align}
Then, \eqref{eq:long_pre1} follows from combining \eqref{eq:short-decomposition}, \eqref{eq:long_pre7}, and \eqref{eq:long_pre8}. 

To show $\sum_{a\in \mathcal{A}}\omega _{\mathrm{dce}}^{\mathrm{l}}(a)=1$, consider the following derivation.
\begin{align*}
\sum_{a\in \mathcal{A}}\omega _{\mathrm{dce}}^{\mathrm{l}}(a) &\overset{(1)}{=}\frac{ \var(D)-P\{D=1\}\sum_{j=1}^{K}M_{j}(E[A_{j}|D=1]-E[A_{j}])]}{\var(D)-\cov (D,A)\var(A)^{-1}\cov(A,D)} \\
&\overset{(2)}{=}\frac{\var(D)-M\cov(A,D)}{\var(D)-\cov(D,A)\var(A)^{-1}\cov(A,D)} \overset{(3)}{=}1~,
\end{align*}
where (1) holds by $\sum_{a\in \mathcal{A}}\pi _{1}(a)=1$ and $\sum_{a\in \mathcal{A}}\pi _{1}(a)a_{j}=E[A_{j}|D=1]$, and (2) holds by $ P\{D=1\}(E[A_{j}|D=1]-E[A_{j}])=\cov(A_{j},D)$, and (3) holds by definition of $M$.

We show $\sum_{a\in \mathcal{A}}\omega _{\mathrm{ind}}^{\mathrm{l}}(a)=0$ by the following derivation applied to its numerator:
\[
\var(D)\sum_{a\in \mathcal{A}}[\pi _{1}(a)-\pi _{0}(a)]-\sum_{a\in \mathcal{A }}P\{A=a\}\sum_{j=1}^{K}M_{j}(a_{j}-E[A_{j}])=0~,
\]
where the equality holds by $\sum_{a\in \mathcal{A}}\pi _{1}(a)=\sum_{a\in \mathcal{A }}\pi _{0}(a)=\sum_{a\in \mathcal{A}}P\{A=a\}=1$ and $\sum_{a\in \mathcal{A} }P\{A=a\}a_{j}=E[A_{j}]$.

Finally, we show $\sum_{a\in \mathcal{A}}a_{u}\omega _{\mathrm{ind}}^{ \mathrm{l}}(a)=0$ for any $u=1,\ldots ,K$. Once again, we focus on the following derivation applied to its numerator:
\begin{align*}
&\sum_{a\in \mathcal{A}}a_{u}\var(D)[\pi _{1}(a)-\pi _{0}(a)]-\sum_{a\in \mathcal{A}}a_{u}P\{A =a\}\sum_{j=1}^{K}M_{j}(a_{j}-E[A_{j}]) \\
&\overset{(1)}{=}\var(D)[E[ A_{u}|D=1] -E[ A_{u}|D=0] ]-M\cov (A_{u},A) \\
&\overset{(2)}{=}\cov(D,A_{u})-\cov(D,A)\var(A)^{-1}\cov(A,A_{u}) \overset{(3)}{=}0~,
\end{align*}
where (1) holds by $\sum_{a\in \mathcal{A}}\pi _{d}(a)a_{j}=E[A_{j}|D=d]$ for $d=0,1,$ and $\sum_{a\in \mathcal{A}}a_{u}P\{A=a\}(a_{j}-E[A_{j}])=\cov (A_{u},A_{j})$, (2) holds by $\var(D)[E[ A_{u}|D=1] -E[ A_{u}|D=0] ]=\cov(D,A_{u})$ and the definition of $M$, and (3) holds by the fact that $\var(A)^{-1}\cov(A,A_{u})$ equals a column vector with zeros except for a one in the $u$th position.
\end{proof}

\sloppy
\begin{proof}[Proof of Theorem \ref{thm:long}]
The first part follows from Theorem \ref{thm:long_pre}, which also yields $\sum_{a\in \mathcal{A}}\omega^{\rm l}_{\rm dce}(a)=1$ and $\sum_{a\in \mathcal{A}}\omega^{\rm l}_{\rm ind}(a)=0$. To complete the proof, we now show the equivalence between (a), (b), (c), and (d).

First, we show that (a) implies (b), (c), and (d). To this end, assume (a) holds. Then, the long regression in \eqref{eq:long-regression} is equivalent to an SFE regression in \eqref{eq:sfe-regression}. To see why, note that (a) implies that $A$ is a $K$-dimensional vector that is either equal to 0 or a canonical vector (i.e., a vector with a 1 in only one of its coordinates and zeroes otherwise). If we then let $\theta (a) =\theta _{0}$ for $a=0$ and $\theta (a) =\theta_{j}$ for $a$ being the canonical vector with $j$th coordinate equal to one, we get
\begin{equation*}
\theta _{0}+\theta ^{\prime }A = \sum_{a\in \mathcal{A}}\theta (a) I\{ A=a\}~.
\end{equation*}
Therefore, $\Delta _{\rm long}=\Delta _{\mathrm{sfe}}$ and Theorem \ref{thm:sfe_pre} imply (b) (with $\omega^{\rm l}_{\rm dce}(a)=\omega_{\rm sfe}(a)$), (c), and (d).

Second, we show that (b), (c), or (d) implies (a) or, equivalently, the negation of (a) implies the negation of (b), the negation of (c), and the negation of (d).

Start by considering the case when $K=1$. Then, (a) fails when $\mathcal{A}\neq\{ 0,1\} $. 
For example, consider the case where $\mathcal{A}=\{ 0,1,2\} $ with $\{ A|D=1\} \sim {\rm Bi}( 2,0.3) $, $\{ A|D=0\} \sim {\rm Bi}( 2,0.9) $, and $ P\{D=1\} =0.5$, where ${\rm Bi}(n,p)$ denotes a Binomial distribution with $n$ trials and probability $p$. With this distribution of $(A,D)$, the weights in  \eqref{eq:long_pre2} become $\omega^{\rm l}_{\rm dce}\approx [ -0.1,0.76,0.34] $ and $\omega^{\rm l}_{\rm ind}\approx [ -0.14,0.28,-0.14] $, and so (b) and (c) fail.
To verify that (d) fails, consider the case with $\mu(0,a) = 0$ for all $a \in \mathcal{A}$, $\mu(1,0) = 12$, and $ \mu(1,1) =\mu(1,2) = 1$. We then have $\mu(1, a) - \mu(0, a) > 0$ for all $a \in \mathcal{A}$, yet $\Delta_{\mathrm{long}} = \Delta_{\mathrm{dce}}^{l} = -0.1<0$, and so (d) fails.

Next, consider the case when $K=2$. In this case, (a) can fail when (i) $\mathcal{ A}_{j}\not=\{ 0,1\} $ for some $j=1,2$ or (ii) $\mathcal{A}_{j}=\{ 0,1\} $ for all $j=1,2$ but $A_{1}A_{2}\not=0$. 
For (i), consider $\{ A_{1}|D=0\} \sim {\rm Bi}(2,0.9) $, $\{ A_{1}|D=1\} \sim {\rm Bi}(2,0.3) $, $P\{D=1\} =0.5$, $A_{2}\perp \{ D,A_{1}\} $, and $\var(A_{2}) >0$. The fact that $A_{2}\perp \{ D,A_{1}\} $ and $\var(A_{2}) >0$ implies that $A_{2}$ drops out of the expressions in \eqref{eq:long_pre2}, and the example becomes identical to the one considered when $K=1$, where (b), (c), and (d) fail. 
For (ii), let ${\rm Ber}(p)$ denote a Bernoulli distribution with parameter $p$ and consider $\{ A_{j}|D=0\} \sim {\rm Ber}( 0.1) $ and $ \{ A_{j}|D=1\} \sim {\rm Ber}( 0.7) $ for $j=1,2$, with  $P\{D=1\} =0.5$, so that $\mathcal{A}_{j}=\{ 0,1\} $ for $j=1,2$ and $P\{A_{1}A_{2}=0\} \approx 0.45$. With this distribution of $(A,D)$, the weights in \eqref{eq:long_pre2} become $\omega^{\rm l}_{\rm dce}\approx [ 0.34,0.38,0.48,-0.10] $ and $\omega^{\rm l}_{\rm ind}\approx [ -0.14,0.14,0.14,-0.14] $, and so (b) and (c) fail. To verify that (d) fails, consider the case with $\mu(0,a) = 0$ for all $a \in \mathcal{A}$, $\mu(1,(0,0)) =\mu(1,(1,0))=\mu(1,(0,1)) = 1$, and $\mu(1,(1,1)) = 13$. We then have $\mu(1, a) - \mu(0, a) > 0$ for all $a \in \mathcal{A}$, yet $\Delta_{\mathrm{long}} = \Delta_{\mathrm{dce}}^{l} = -0.1<0$, and so (d) fails.

Finally, consider the case $K>2$. Then, (a) can fail when (i) $\mathcal{A}_{j}\not=\{ 0,1\} $ for some $j=1,\dots ,K$ or (ii) $\mathcal{A}_{j}=\{ 0,1\} $ for all $j=1,\dots ,K$ but $A_{j}A_{l}\not=0$ for some $j,l=1,\dots ,K$ with $j\not=l$. In either case, we can repeat the examples used for $K=2$ by adding coordinates $j=3,\dots ,K$ with $\{ A_{j}:j>2\} \perp \{ D,\{ A_{j}:j\leq 2\} \}$, and $\var(A_{j}) >0$ for $j>2$. By construction, $\{ A_{j}:j>2\} $ drops out of the expressions in \eqref{eq:long_pre2}, and the examples considered with $K=2$ imply the failure of (b), (c), and (d).
\end{proof}

\sloppy
\begin{theorem}\label{thm:inter_pre} 
Consider the long regression with interactions in \eqref{eq:inter-regression} and assume that $P\{D=d,A=a\}>0$ for all $(d,a)\in \mathcal D\times \mathcal A$. Assume that the variance-covariance matrix of $(A,D,AD)$, denoted y $\Sigma_{\rm inter}$, is positive definite and let $M=\cov(D,W) \var(W)^{-1}$ with $W\equiv (A',A'D)'$. Then,  
\begin{align}
\Delta _{\mathrm{inter}} &= \sum_{a\in \mathcal{A}} \omega^{\rm i}_{\rm dce}(a)E[Y(1,a)-Y(0,a)|D=1,A=a] \notag \\ 
&+\sum_{a\in \mathcal{A}}\omega^{\rm i}_{\rm ind}(a)(E[Y(0,a)|D=0,A=a]-E[Y(0,0)|D=0,A=0]) \notag \\ 
&+\sum_{a\in \mathcal{A}}\omega^{\rm i}_{\rm dce}(a)(E[Y(0,a)|D=1,A=a]-E[Y(0,a)|D=0,A=a]) \label{eq:inter_pre1}~,  
\end{align}
where 

{\footnotesize 
\begin{align}
\omega^{\rm i}_{\rm dce}(a)& \equiv \frac{\pi _{1}(a)\left[
\sigma^2_{D}-p\sum_{j=1}^{K}M_{j}(a_{j}-E[A_{j}])-p\sum_{j=1}^{K}M_{j+K}( a_{j}-p E[ A_{j}|D=1] )\right]}{\sigma^2_{D}-M\cov(W,D)} \label{eq:inter_pre2} \\
\omega^{\rm i}_{\rm ind}(a)& \equiv  \frac{\sigma^2_{D}(\pi _{1}(a)-\pi_{0}(a)) -\sum_{j=1}^{K}M_{j}p_{a}(a_{j}-E[A_{j}])-p\sum_{j=1}^{K}M_{j+K}( \pi _{1}(a)a_{j}-p_a E[ A_{j}|D=1] )}{\sigma^2_{D}-M\cov(W,D)}~,\notag 
\end{align}
}
$p=P\{D=1\}$, $p_{a} = P\{A=a\}$, $\sigma^2_{D}=\var(D)$, and $\pi_{d}(a)$ is defined in \eqref{eq:pis}. Furthermore, $\sum_{a\in \mathcal{A}}\omega^{\rm i}_{\rm dce}(a)=1$, $\sum_{a\in \mathcal{A}}a\omega^{\rm i}_{\rm ind}(a)={\bf 0}$, and $\sum_{a\in \mathcal{A}}\omega^{\rm i}_{\rm ind}(a)=0$.
\end{theorem}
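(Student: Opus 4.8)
The plan is to mirror, step for step, the proof of Theorem~\ref{thm:long_pre}, now taking the full control vector to be $W \equiv (A', A'D)'$ and writing $\gamma \equiv (\theta', \lambda')'$ for the coefficient block on $W$ in \eqref{eq:inter-regression}. First I would invoke the projection first-order conditions $E[(1, D, W')'(Y - \Delta_{\rm inter}D - \theta_0 - \gamma' W)] = {\bf 0}$ and profile out $\theta_0$, which turns them into the centered-moment system $\cov(D,Y) = \var(D)\Delta_{\rm inter} + \cov(W,D)'\gamma$ and $\cov(W,Y) = \cov(W,D)\Delta_{\rm inter} + \var(W)\gamma$. Positive definiteness of $\Sigma_{\rm inter}$ makes $\var(W)$ invertible, so solving the second equation for $\gamma$ and substituting into the first gives $(\var(D) - M\cov(W,D))\Delta_{\rm inter} = \cov(D,Y) - M\cov(W,Y)$, hence, using $\cov(D,Y) = \var(D)\Delta_{\rm short}$, the closed form $\Delta_{\rm inter} = (\var(D)\Delta_{\rm short} - M\cov(W,Y))/(\var(D) - M\cov(W,D))$. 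The denominator is the residual variance of $D$ after projecting it on $(1, W)$, which is strictly positive because $\Sigma_{\rm inter}$ is positive definite.

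Second, I would expand the covariances feeding $M\cov(W,Y)$. For the first $K$ coordinates, $\cov(A_j, Y)$ is given by \eqref{eq:long_pre8} from the proof of Theorem~\ref{thm:long_pre} and can be quoted verbatim. For the new coordinates, since $D$ is binary we have $A_j D = a_j$ on $\{(D,A) = (1,a)\}$ and $0$ elsewhere, and $E[A_j D] = p\,E[A_j \mid D=1]$; plugging $Y = \sum_{(d,a)} Y(d,a) I\{(D,A) = (d,a)\}$ into $\cov(A_j D, Y) = E[A_j D\, Y] - E[A_j D]E[Y]$ and carrying out the same direct/indirect/selection bookkeeping as in Section~\ref{sec:short} (adding and subtracting the relevant conditional means) produces an expression of the same three-block shape, but with $a_j$ and $\pi_1(a)\,a_j$ in place of the $(a_j - E[A_j])$ factors that appeared for $\cov(A_j, Y)$. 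Substituting both expansions, together with the expansion of $\cov(D,Y) = \var(D)\Delta_{\rm short}$ through \eqref{eq:short-decomposition}, back into the closed form and collecting the coefficients of $E[Y(1,a)-Y(0,a) \mid D=1,A=a]$, of $E[Y(0,a) \mid D=0,A=a] - E[Y(0,0) \mid D=0,A=0]$, and of $E[Y(0,a) \mid D=1,A=a] - E[Y(0,a) \mid D=0,A=a]$ yields \eqref{eq:inter_pre1} with the weights \eqref{eq:inter_pre2}; the $M_j$ pieces come from the $A_j$ block and the $M_{j+K}$ pieces from the $A_j D$ block.

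Third, I would verify $\sum_{a} \omega^{\rm i}_{\rm dce}(a) = 1$, $\sum_{a} \omega^{\rm i}_{\rm ind}(a) = 0$, and $\sum_{a} a_u\,\omega^{\rm i}_{\rm ind}(a) = 0$ for every $u$ by the same direct algebra used at the end of the proof of Theorem~\ref{thm:long_pre}. Summing the numerator of $\omega^{\rm i}_{\rm dce}(a)$ over $a$ and using $\sum_{a} \pi_1(a) = 1$, $\sum_{a} \pi_1(a)a_j = E[A_j \mid D=1]$, $p(E[A_j \mid D=1] - E[A_j]) = \cov(A_j, D)$, and $E[A_j D](1-p) = \cov(A_j D, D)$ collapses it to $\var(D) - M\cov(W,D)$, i.e.\ the common denominator. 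Summing the numerator of $\omega^{\rm i}_{\rm ind}(a)$ over $a$, and then against $a_u$, and using $\sum_{a}(\pi_1(a) - \pi_0(a)) = 0$, $\sum_{a} P\{A=a\}a_j = E[A_j]$, $\sum_{a} a_u P\{A=a\}(a_j - E[A_j]) = \cov(A_u, A_j)$, $\sum_{a}\pi_1(a)a_u a_j = E[A_u A_j \mid D=1]$, together with the identities $\cov(D, A_u) = \var(D)(E[A_u \mid D=1] - E[A_u \mid D=0])$, $p(E[A_u A_j \mid D=1] - E[A_u]E[A_j \mid D=1]) = \cov(A_u, A_j D)$, and---the crucial point---that $\var(W)^{-1}\cov(W, W_u)$ is the $u$-th canonical basis vector, so that $M\cov(W, W_u) = M_u$ with $W_u$ ranging over $A_1,\dots,A_K,A_1 D,\dots,A_K D$, gives $0$ in all cases.

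The main obstacle I anticipate is the second step: doubling the control vector forces a genuinely new expansion of $\cov(A_j D, Y)$, which must then be reorganized so that its direct, indirect, and selection blocks align term by term with those coming from $\cov(A_j, Y)$ and from $\var(D)\Delta_{\rm short}$, and so that the $M_j$ and $M_{j+K}$ contributions fuse into the compact bracketed numerators of \eqref{eq:inter_pre2} with the correct $p$ and $p_a$ prefactors. Positive-definiteness of $\Sigma_{\rm inter}$ plays no deeper role than ensuring $\var(W)^{-1}$ exists and the denominator is nonzero; and the variant of \eqref{eq:inter_pre1} in which the selection block disappears follows at once by invoking Assumption~\ref{ass:strong-SOO} to replace each conditional mean $E[Y(d,a)\mid D=d',A=a]$ by $\mu(d,a)$, exactly as in the short and long regression cases.
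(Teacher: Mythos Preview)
Your proposal is correct and follows the paper's own argument essentially line for line: profile out the intercept, solve the two-block covariance system for $\Delta_{\rm inter}$ via $M=\cov(D,W)\var(W)^{-1}$, expand $\cov(A_j,Y)$ and $\cov(DA_j,Y)$ into the three direct/indirect/selection blocks, and then verify the weight identities using $\var(W)^{-1}\cov(W,A_u)=e_u$. One small slip: $M\cov(W,W_u)$ equals $\cov(D,W_u)$, not $M_u$, but that is exactly the identity you need (and the one the paper uses) to make the $a_u$-weighted numerator collapse to $\cov(D,A_u)-\cov(D,A_u)=0$.
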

\begin{proof}
Let $\theta =(\theta _{j}:j=1,\dots ,K)$, $\lambda = (\lambda _{j}:j=1,\dots ,K)$, and $\alpha = (\theta',\lambda')'$. By properties of projections, 
\begin{equation}
E[( 1,D,A^{\prime },DA^{\prime })'(Y-(\Delta _{\mathrm{inter}}D+\theta _{0}+\alpha ^{\prime }W))]~=~{\bf 0}~.  \label{eq:inter_pre3}
\end{equation}
Profiling $\theta_0$ leads to, 
\begin{align}
\cov(D,Y)&  = \var(D)\Delta _{\mathrm{inter}}+\cov(W,D)^{\prime }\alpha \label{eq:inter_pre4} \\
\cov(W,Y)&  = \cov(W,D)\Delta _{\mathrm{inter}}+\var(W)\alpha ~.\label{eq:inter_pre4B}
\end{align}
Since $\Sigma_{\mathrm{inter}} $ is positive definite, $\var(W)$ is positive definite. Then, \eqref{eq:inter_pre4B} implies that $\alpha =\var(W)^{-1}(\cov(W,Y)-\cov(W,D)\Delta _{\mathrm{inter}})$. If we plug this into \eqref{eq:inter_pre4}, we get 
\begin{align}
(\var(D)-M\cov(W,D))\Delta _{\mathrm{ inter}}
&=\cov(D,Y)-M\cov(W,Y)~.\label{eq:inter_pre6}
\end{align}
Since $\Sigma_{\mathrm{inter}} $ is positive definite, $\var(D)-\cov(W,D)^{\prime }\var(W)^{-1}\cov(W,D)>0$ and so \eqref{eq:inter_pre6} implies that 
\begin{equation}
\Delta _{\mathrm{inter}} = \frac{\cov(D,Y)-\sum_{j=1}^{K}M_{j}\cov(A_{j},Y)-\sum_{j=1}^{K}M_{j+K}\cov(DA_{j},Y)}{\var(D)-M\cov(W,D)}~,\label{eq:inter_pre7}
\end{equation}
where we used that $\var(D)\Delta_{\rm short}=\cov(D,Y)$. For any $j=1,\dots ,K$, some algebra shows that
\begin{align}
\cov(&A_{j},Y) = \sum_{a\in \mathcal{A}}E[Y(1,a)-Y(0,a)|D=1,A=a](a_{j}-E[A_{j}])\pi _{1}(a)p \notag\\ 
&+ \sum_{a\in \mathcal{A}}(E[Y(0,a)|D=0,A=a]-E[Y(0,0)|D=0,A=0])(a_{j}-E[A_{j}])p_{a}\notag\\ 
&+ \sum_{a\in \mathcal{A}}(E[Y(0,a)|D=1,A=a]-E[Y(0,a)|D=0,A=a])(a_{j}-E[A_{j}])\pi _{1}(a)p~,\label{eq:inter_pre8}
\end{align}
and
\begin{align}
&\cov(DA_{j},Y)= p \sum_{a\in \mathcal{A}}E[ Y( 1,a) -Y( 0,a) |D=1,A=a] \pi_{1}(a)( a_{j}-p E[ A_{j}|D=1] )\notag\\ 
&+ p\sum_{a\in \mathcal{A}}( E[ Y( 0,a) |D=1,A=a] -E[ Y(0,a) |D=0,A=a] ) \pi _{1}(a)( a_{j}-p E[ A_{j}|D=1] )  \notag\\ 
&+p\sum_{a\in \mathcal{A}}( E[ Y( 0,a) |D=0,A=a] -E[Y(0,0)|D=0,A=0])( \pi _{1}(a)a_{j}-p_a E[ A_{j}|D=1] ) ~.\label{eq:inter_pre9}
\end{align}
By plugging in \eqref{eq:short-decomposition}, \eqref{eq:inter_pre8}, and \eqref{eq:inter_pre9} into \eqref{eq:inter_pre7}, \eqref{eq:inter_pre1} follows. 

To show $\sum_{a\in \mathcal{A}}\omega _{\mathrm{dce}}^{\mathrm{i}}(a)=1$, consider the following derivation.
\begin{align*}
\sum_{a\in \mathcal{A}}\omega _{\mathrm{dce}}^{\mathrm{i}}(a)
&\overset{(1)}{=}\frac{ \sigma _{D}^{2}-p\sum_{j=1}^{K}M_{j}(E[A_{j}|D=1]-E[A_{j}])-\sum_{j=1}^{K}M_{j+K} \sigma _{D}^{2}E[A_{j}|D=1]}{\sigma _{D}^{2}-M\cov(W,D)} \\
&\overset{(2)}{=}\frac{\sigma _{D}^{2}-\sum_{j=1}^{K}M_{j}\cov(D,A_{j})- \sum_{j=1}^{K}M_{j+K}\cov(D,DA_{j})}{\sigma _{D}^{2}-M\cov(W,D)}\overset{(3)}{=}1~,
\end{align*}
where (1) holds by $\sum_{a\in \mathcal{A}}\pi _{1}(a)=1$, $\sum_{a\in \mathcal{A}}\pi _{1}(a)a_{j}=E[A_{j}|D=1]$, and $\sigma _{D}^{2}=p( 1-p) $, (2) holds by $p(E[A_{j}|D=1]-E[A_{j}])=\cov(D,A_{j})$ and $ \sigma _{D}^{2}E[A_{j}|D=1]=\cov(DA_{j},D)$, and (3) holds by definition of $ M$.

We show $\sum_{a\in \mathcal{A}}\omega _{\mathrm{ind}}^{\mathrm{i}}(a)=0$ by the following derivation applied to its numerator:
\begin{equation*}
\sigma _{D}^{2}\sum_{a\in \mathcal{A}}(\pi _{1}(a)-\pi _{0}(a))-\sum_{j=1}^{K}M_{j}\sum_{a\in \mathcal{A}}p_{a}(a_{j}-E[A_{j}])-p \sum_{j=1}^{K}M_{j+K}\sum_{a\in \mathcal{A}}(\pi _{1}(a)a_{j}-p_{a}E[A_{j}|D=1])=0~,
\end{equation*}
where the equality holds by $\sum_{a\in \mathcal{A}}\pi _{1}(a)=\sum_{a\in \mathcal{A }}\pi _{0}(a)=\sum_{a\in \mathcal{A}}p_{a}=1$, $\sum_{a\in \mathcal{A} }p_{a}a_{j}=E[A_{j}]$, and $\sum_{a\in \mathcal{A}}\pi _{1}(a)a_{j}=E[A_{j}|D=1]$.

Finally, we show $\sum_{a\in \mathcal{A}}a_{u}\omega _{\mathrm{ind}}^{ \mathrm{i}}(a)=0$ for any $u=1,\ldots ,K$. Once again, we focus on the following derivation applied to its numerator:
{\footnotesize
\begin{align*}
\sum_{a\in \mathcal{A}}a_{u}\sigma _{D}^{2}(\pi _{1}(a)-\pi _{0}(a))&-\sum_{j=1}^{K}M_{j}\sum_{a\in \mathcal{A} }a_{u}p_{a}(a_{j}-E[A_{j}])-p\sum_{j=1}^{K}M_{j+K}\sum_{a\in \mathcal{A} }a_{u}(\pi _{1}(a)a_{j}-p_{a}E[A_{j}|D=1]) \\
&\overset{(1)}{=}\sigma _{D}^{2}[E[ A_{u}|D=1] -E[ A_{u}|D=0] ]\\
&-\sum_{j=1}^{K}M_{j}\cov(A_{u},A_{j})- \sum_{j=1}^{K}M_{j+K}p(E[A_{u}A_{j}|D=1]-E[A_{u}]E[A_{j}|D=1]) \\
&\overset{(2)}{=}\cov(D,A_{u})-\sum_{j=1}^{K}M_{j}\cov(A_{j},A_{u})- \sum_{j=1}^{K}M_{j+K}\cov(DA_{j},A_{u}) \\
&\overset{(3)}{=}\cov(D,A_{u})-\cov(D,W)\var(W)^{-1}cov( W,A_{u}) \overset{(4)}{=}0~,
\end{align*}
}
where (1) holds by $\sum_{a\in \mathcal{A}}\pi _{d}(a)a_{j}=E[A_{j}|D=d]$ for $d=0,1,$ $\sum_{a\in \mathcal{A}}a_{u}P\{A=a\}(a_{j}-E[A_{j}])=\cov (A_{u},A_{j})$, and $\sum_{a\in \mathcal{A}}a_{u}p_{a}=E[ A_{u}] $, (2) holds by $\var(D)[E[ A_{u}|D=1] -E[ A_{u}|D=0] ]= \cov(D,A_{u})$ and $p(E[A_{u}A_{j}|D=1]-E[A_{u}]E[A_{j}|D=1])=\cov (DA_{j},A_{u})$, (3) holds by the definition of $M$, and (4) holds by the fact that $\var(W)^{-1}\cov( W,A_{u}) $ equals a column vector with zeros except for a one in the $u$th position.
\end{proof}

\begin{proof}[Proof of Theorem \ref{thm:inter}]
The first part follows from Theorem \ref{thm:inter_pre}, which also yields that $\sum_{a\in \mathcal{A}}\omega^{\rm i}_{\rm dce}(a)=1$ and $\sum_{a\in \mathcal{A}}\omega^{\rm i}_{\rm ind}(a)=0$. To complete the proof, we now show the equivalence between (a), (b), (c), and (d).

First, we show that (a) implies (b), (c), and (d). To this end, assume (a) holds. Then, the long with interactions regression in \eqref{eq:inter-regression} is equivalent to an SAT regression in \eqref{eq:sfe-regression}. To see why, note that (a) implies that $\mathcal{A}=\{{\bf 0}_{K\times 1},\{e_{j}:j=1,\dots ,K\}\}$, where $e_{j}\in \mathbb{R}^{K\times 1}$ has a one in the $j$'th coordinate and zero otherwise. By defining $A_{0}=1-\sum_{j=1}^{K}A_{j}$, $\gamma (a)=\theta _{0}$ and $\Delta _{\mathrm{sat}}(a)=\Delta _{\mathrm{inter }}$ for $a=0$, and $\gamma (a)=\theta _{0}+\theta _{j}$ and $\Delta _{ \mathrm{sat}}(a)=\Delta _{\mathrm{inter}}+\lambda _{j}$ for $a=e_{j}$ with $j=1,\dots ,K$, we get 
\begin{equation*}
\Delta _{\mathrm{inter}}D+\theta _{0}+\theta ^{\prime }A+\lambda ^{\prime }AD ~= ~\sum_{a\in \mathcal{A}}\gamma (a)I\{A=a\}+\sum_{a\in \mathcal{A}}\Delta _{\mathrm{sat}}(a)I\{A=a\}D~.
\end{equation*}
Therefore, $\Delta _{\mathrm{inter}}=\Delta _{\mathrm{sat}}(0)$ and Theorem \ref{thm:sat} imply (b) (with $\omega^{\rm i}_{\rm dce}(0)=1$ and $\omega^{\rm i}_{\rm dce}(e_{j})=0$ for $j=1,\dots ,K$), (c), and (d). 

To conclude, we now show that (b), (c), or (d) implies (a) or, equivalently, the negation of (a) implies the negation of (b), the negation of (c), and the negation of (d).

First, consider the case when $K=1$. Then, (a) fails when $\mathcal{A}\neq \{0,1\}$. 
For example, if $\{ A|D=0\} \sim {\rm Bi}(2,0.3)$, $\{ A|D=1\} \sim {\rm Bi}(2,0.9)$, and $P\{D=1\} =0.5$, and so $ \mathcal{A}=\{0,1,2\}$. By evaluating this information on \eqref{eq:inter_pre2}, we get $\omega^{\rm i}_{\rm dce}\approx [0.19,1.62,-0.81]$ and $\omega^{\rm i}_{\rm ind}\approx [-0.72,1.44,-0.72]$, i.e., (b) and (c) fail. To verify that (d) fails, consider the case with $\mu(0,a) = 0$ for all $a \in \mathcal{A}$, $\mu(1,0) = \mu(1,1) = 1$, and $\mu(1,2) = 3$. We then have $\mu(1, a) - \mu(0, a) > 0$ for all $a \in \mathcal{A}$, yet $\Delta_{\mathrm{inter}} = \Delta_{\mathrm{dce}}^{i} = -0.62<0$, and so (d) fails.

Second, consider the case when $K=2$. Then, (a) can fail when (i) $\mathcal{A }_{j}\not=\{0,1\}$ for some $j=1,2$ or (ii) $\mathcal{A}_{j}=\{0,1\}$ for all $j=1,2$ but $A_{1}A_{2}\not=0$. 
For (i), consider $\{A_{1}|D=0\}\sim {\rm Bi}(2,0.3)$, $\{A_{1}|D=1\}\sim {\rm Bi}(2,0.9)$, $P\{D=1\} =0.5$, $A_{2}\perp \{D,A_{1}\}$, and $\var(A_{2})>0$. The fact that $A_{2}\perp \{D,A_{1}\}$ and $ \var(A_{2})>0$ implies that $A_{2}$ drops out of the expressions in \eqref{eq:inter_pre2}, and the example becomes identical to the one considered when $K=1$ and, thus, (b), (c), and (d) fail. 
For (ii), consider $ \{A_{j}|D=0\}\sim {\rm Be}(0.3)$ and $\{A_{j}|D=1\}\sim {\rm Be}(0.9)$ for $j=1,2$, and $ P(D=1)=0.5$, and so $\mathcal{A}_{j}=\{0,1\}$ for $j=1,2$ and $P(A_{1}A_{2}=0)\approx 0.25$. By evaluating this information on \eqref{eq:inter_pre2}, we get $\omega^{\rm i}_{\rm dce}\approx [0.19,0.81,0.81,-0.81]$ and $\omega^{\rm i}_{\rm ind}\approx [ -0.72,0.72,0.72,-0.72]$, i.e., (b) and (c) fail. To verify that (d) fails, consider the case with $\mu(0,a) = 0$ for all $a \in \mathcal{A}$, $\mu(1,(0,0)) = \mu(1,(1,0)) = \mu(1,(0,1)) = 1$, and $\mu(1,(1,1)) = 3$. We then have $\mu(1, a) - \mu(0, a) > 0$ for all $a \in \mathcal{A}$, yet $\Delta_{\mathrm{inter}} = \Delta_{\mathrm{dce}}^{i} = -0.62<0$, and so (d) fails.

Finally, consider $K>2$. Then, (a) can fail when (i) $\mathcal{A}_{j}\not=\{ 0,1\} $ for some $j=1,\dots ,K$ or (ii) $\mathcal{A}_{j}=\{ 0,1\} $ for all $j=1,\dots ,K$ but $A_{j}A_{l}\not=0$ for some $j,l=1,\dots ,K$ with $j\not=l$. In either case, we can repeat the examples used for $K=2$ by adding coordinates $j=3,\dots ,K$ with $\{ A_{j}:j>2\} \perp \{ D,\{A_{j}:j\leq 2\} \}$, and $\var(A_{j}) >0$ for $j>2$. By construction, $\{ A_{j}:j>2\} $ drops out of the expressions in \eqref{eq:inter_pre2}, and the examples considered with $K=2$ imply the failure of (b), (c), and (d).
\end{proof}

\begin{theorem}\label{thm:sfe_pre} 
Consider the SFE regression in \eqref{eq:sfe-regression}, and assume that $P\{D=d,A=a\}>0$ for all $(d,a)\in \mathcal D\times \mathcal A$. Then, 
\begin{equation}\label{eq:sfe-decomposition-soo2}
  \Delta_{\rm sfe}= \Delta_{\rm dce}^{\rm f}+ \Delta_{\rm sel}^{\rm f}~, 
\end{equation}
where
\begin{align}
\omega_{\mathrm{sfe}}(a)&~\equiv ~\frac{P\{D=0|A=a\}  P\{ D=1|A=a\}  P\{ A=a\}  }{\sum_{\tilde{a}\in \mathcal{A}}P\{ D=1|A=\tilde{a}\}  P\{ D=0|A=\tilde{a}\}  P\{ A=\tilde{a}\} }~~~\text{ for all }a \in \mathcal{A} \label{eq:sfe_2}\\
  \Delta_{\rm dce}^{\rm f} & ~\equiv ~ \sum_{a\in\mathcal A}  \omega_{\rm sfe}(a) E[Y(1,a) - Y(0,a)|D=1,A=a] \label{eq:sfe-dce2} \\
  \Delta_{\rm sel}^{\rm f} & ~\equiv ~ \sum_{a\in\mathcal A}  \omega_{\rm sfe}(a) ( E[Y(0,a)|D=1,A=a]-E[Y(0,a)|D=0,A=a] )\label{eq:sfe-sel2}~.
\end{align}
Furthermore, note that $\sum_{a\in \mathcal{A}}\omega_{\mathrm{sfe}}(a)=1$ and $\omega_{\mathrm{sfe}}(a)\geq 0$.
\end{theorem}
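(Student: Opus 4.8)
The plan is to compute the projection coefficient $\Delta_{\rm sfe}$ in \eqref{eq:sfe-regression} by the Frisch--Waugh--Lovell theorem, exploiting that the dummies $\{I\{A=a\}:a\in\mathcal A\}$ form a saturated basis in $A$. First I would observe that the linear projection of any variable onto $\mathrm{span}\{I\{A=a\}:a\in\mathcal A\}$ is its conditional expectation given $A$, so partialling out those dummies from $Y$ and $D$ produces the residuals $\tilde Y = Y - E[Y\mid A]$ and $\tilde D = D - E[D\mid A]$, whence $\Delta_{\rm sfe} = E[\tilde D\,Y]/E[\tilde D^{2}]$ (the cross term $E[\tilde D\,E[Y\mid A]]$ vanishes by iterated expectations). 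The hypothesis $P\{D=d,A=a\}>0$ for all $(d,a)$ ensures $\var(D\mid A=a)>0$ for every $a$, hence $E[\tilde D^{2}]>0$ and the design matrix is nonsingular; this is the only role of the positivity assumption.

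Next I would evaluate the two moments stratum by stratum. The denominator is $E[\tilde D^{2}] = E[\var(D\mid A)] = \sum_{a\in\mathcal A} P\{D=1\mid A=a\}P\{D=0\mid A=a\}P\{A=a\}$, i.e. the normalizing constant in \eqref{eq:sfe_2}. For the numerator, $E[\tilde D\,Y\mid A=a] = \cov(D,Y\mid A=a)$; expanding $E[Y\mid A=a]$ and $E[D\mid A=a]$ in terms of $E[Y\mid D=d,A=a]$ and collecting terms gives the elementary identity $\cov(D,Y\mid A=a) = P\{D=1\mid A=a\}P\{D=0\mid A=a\}\,\big(E[Y\mid D=1,A=a]-E[Y\mid D=0,A=a]\big)$. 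Averaging over $A$ and dividing by $E[\tilde D^{2}]$ then yields $\Delta_{\rm sfe} = \sum_{a\in\mathcal A}\omega_{\rm sfe}(a)\big(E[Y\mid D=1,A=a]-E[Y\mid D=0,A=a]\big)$ with $\omega_{\rm sfe}(a)$ exactly as in \eqref{eq:sfe_2}.

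Finally I would translate into potential outcomes: by \eqref{eq:obsy}, $E[Y\mid D=d,A=a] = E[Y(d,a)\mid D=d,A=a]$, so the per-stratum contrast is $E[Y(1,a)\mid D=1,A=a] - E[Y(0,a)\mid D=0,A=a]$; adding and subtracting $E[Y(0,a)\mid D=1,A=a]$ decomposes it into the direct term $E[Y(1,a)-Y(0,a)\mid D=1,A=a]$ and the selection term $E[Y(0,a)\mid D=1,A=a]-E[Y(0,a)\mid D=0,A=a]$, and summing against $\omega_{\rm sfe}$ gives \eqref{eq:sfe-decomposition-soo2}--\eqref{eq:sfe-sel2}. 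The properties $\sum_a\omega_{\rm sfe}(a)=1$ and $\omega_{\rm sfe}(a)\ge 0$ are immediate, since the per-stratum numerators sum to the common denominator and each factor is a probability with the denominator strictly positive by positivity. The argument is entirely elementary; the only mild care points are justifying the FWL residualization (that the saturated dummies span conditional expectations given $A$) and the bookkeeping in the stratum-level covariance identity — there is no real obstacle here, in contrast to the weighted-average derivations for $\Delta_{\rm long}$ and $\Delta_{\rm inter}$.
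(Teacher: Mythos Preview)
Your argument is correct and follows essentially the same route as the paper: the paper solves the normal equations by profiling out the $\theta(a)$'s, which is exactly the Frisch--Waugh--Lovell residualization you perform, then uses the same binary-$D$ covariance identity to obtain $\Delta_{\rm sfe}=\sum_{a}\omega_{\rm sfe}(a)\big(E[Y\mid D=1,A=a]-E[Y\mid D=0,A=a]\big)$, and finishes with the same add-and-subtract of $E[Y(0,a)\mid D=1,A=a]$. The only cosmetic difference is that you invoke FWL and conditional-expectation language where the paper manipulates the moment conditions directly.
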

\begin{proof}
By properties of projections, 
\begin{align}
E[ YD]  & = \Delta _{\mathrm{sfe}}E[ D] +\sum_{a\in \mathcal{A}}\theta ( a) E[ I\{A=a\} D] \label{eq:sfe_4} \\
E[ Y I\{ A=a\} ]  & = \Delta _{\mathrm{sfe}}E[ DI\{A=a\} ] +\theta ( a) P\{A=a\} \text{ for all }a\in \mathcal{A}~.\label{eq:sfe_5}
\end{align}
By $P\{A=a\}>0$ for all $a\in \mathcal{A}$, \eqref{eq:sfe_5} implies that
\begin{equation}
\theta ( a)  = E[ Y|A=a] -\Delta _{\mathrm{sfe}}E[ D|A=a]~~~ \text{ for all }a\in \mathcal{A}~.
\label{eq:sfe_6}
\end{equation}
Then, \eqref{eq:sfe_4}, \eqref{eq:sfe_6}, and some algebra imply that
\begin{align}
&E[ Y|D=1] -\sum_{a\in \mathcal{A}}E[ Y|A=a] P\{A=a|D=1\} \notag \\
&=~\Delta _{\mathrm{sfe}} \sum_{a\in \mathcal{A}}\frac{P\{D=1|A=a\}P\{ D=0|A=a\} P\{ A=a\} }{P\{ D=1\}}~,
\label{eq:sfe_7}
\end{align}
Under $P\{ A=a\} >0$ and $P\{ D=1|A=a\} \in ( 0,1) $ for all $a\in \mathcal{A}$, \eqref{eq:sfe_7} implies that  
\begin{align}
\Delta _{\mathrm{sfe}} & ~=~ \frac{P\{ D=1\} E[ Y|D=1]-\sum_{a\in \mathcal{A}}E[ Y|A=a] P\{ A=a,D=1\} }{\sum_{a\in \mathcal{A}}P\{ D=1|A=a\} P\{ D=0|A=a\} P\{A=a\} } \notag\\
& ~=~ \sum_{a\in \mathcal{A}}\omega_{\rm sfe}(a) \big( E[ Y|A=a,D=1] -E[Y|A=a,D=0]  \big)~.\label{eq:sfe_8}
\end{align}
By doing algebra on \eqref{eq:sfe_8}, \eqref{eq:sfe-decomposition-soo2} follows. Finally, verifying $\sum_{a\in \mathcal{A}}\omega_{\mathrm{sfe}}(a)=1$ and $\omega_{\mathrm{sfe}}(a)\geq 0$ is straightforward given the definition in \eqref{eq:sfe_2}.
\end{proof}

\begin{proof}[Proof of Theorem \ref{thm:sfe}]
This result follows immediately from Theorem \ref{thm:sfe_pre}. 
\end{proof}

\begin{theorem}\label{thm:sat} 
Consider the SAT regression in \eqref{eq:sat-regression} and assume that $P\{D=d,A=a\}>0$ for all $(d,a)\in \mathcal D\times \mathcal A$. Then, for all $a\in \mathcal{A}$,
\begin{equation}\label{eq:sat-decomposition-soo}
  \Delta_{\rm sat}(a) =  \Delta_{\rm dce}^{\rm t}(a)+ \Delta_{\rm sel}^{\rm t}(a)~, 
\end{equation}
where
\begin{align}
  \Delta_{\rm dce}^{\rm t}(a) &  \equiv   E[Y(1,a) - Y(0,a)|D=1,A=a] \label{eq:sat-dce} \\
  \Delta_{\rm sel}^{\rm t}(a) &  \equiv   E[Y(0,a)|D=1,A=a]-E[Y(0,a)|D=0,A=a] \label{eq:sat-sel}~.
\end{align}
Furthermore, under Assumption \ref{ass:joint-SOO}, $\Delta_{\rm sel}^{\rm t}(a)=0$ and 
\begin{equation}\label{eq:sat_2}
\Delta_{\mathrm{sat}}( a)  = \Delta_{\rm dce}^{\rm t}(a) = \mu ( 1,a) -\mu (0,a)~.
\end{equation}
\end{theorem}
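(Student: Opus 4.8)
The plan is to exploit the fact that \eqref{eq:sat-regression} is a \emph{fully saturated} linear projection in $(D,A)$, so its coefficients are pinned down by conditional means, and then translate those conditional means into potential-outcome language via \eqref{eq:obsy}.

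First I would observe that, since $D$ is binary, the regressors $\{I\{A=a\}:a\in\mathcal A\}$ together with their interactions $\{I\{A=a\}D:a\in\mathcal A\}$ span the same linear space as the cell indicators $\{I\{(D,A)=(d,a)\}:(d,a)\in\mathcal D\times\mathcal A\}$. The assumption $P\{D=d,A=a\}>0$ for all $(d,a)$ guarantees the associated design is nonsingular, so the projection residual satisfies $E[\epsilon\mid D,A]=0$ and the fitted value equals $E[Y\mid D,A]$ pointwise. Evaluating at $A=a$ and differencing over $D$ gives the key identity
\[
\Delta_{\rm sat}(a) = E[Y\mid D=1,A=a] - E[Y\mid D=0,A=a]~.
\]
Equivalently, one can derive this directly from the normal equations $E[\epsilon D I\{A=a\}]=E[\epsilon I\{A=a\}]=0$ by solving the resulting two-equation linear system cell by cell, exactly in the style of the proof of Theorem \ref{thm:sfe_pre}.

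Next I would invoke \eqref{eq:obsy} to write $E[Y\mid D=d,A=a]=E[Y(d,a)\mid D=d,A=a]$, since on the event $\{D=d,A=a\}$ the observed outcome equals $Y(d,a)$. Substituting and adding and subtracting $E[Y(0,a)\mid D=1,A=a]$ yields
\[
\Delta_{\rm sat}(a) = \underbrace{E[Y(1,a)-Y(0,a)\mid D=1,A=a]}_{\Delta_{\rm dce}^{\rm t}(a)} + \underbrace{E[Y(0,a)\mid D=1,A=a]-E[Y(0,a)\mid D=0,A=a]}_{\Delta_{\rm sel}^{\rm t}(a)}~,
\]
which is \eqref{eq:sat-decomposition-soo} with the terms as in \eqref{eq:sat-dce}--\eqref{eq:sat-sel}; here $\Delta_{\rm dce}^{\rm t}(a)$ is an average partial causal effect conditional on $\Omega=I\{D=1,A=a\}$ in the sense of \eqref{eq:cond-PCE}, and $\Delta_{\rm sel}^{\rm t}(a)$ is a pure selection term.

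Finally, under Assumption \ref{ass:strong-SOO} (with $X$ suppressed as throughout Section \ref{sec:decomposition}), the independence $(D,A)\perp Y(0,a)$ gives $E[Y(0,a)\mid D=d,A=a]=E[Y(0,a)]=\mu(0,a)$ for $d=0,1$, so $\Delta_{\rm sel}^{\rm t}(a)=0$; likewise $E[Y(1,a)\mid D=1,A=a]=\mu(1,a)$, so $\Delta_{\rm dce}^{\rm t}(a)=\mu(1,a)-\mu(0,a)$, which is \eqref{eq:sat_2}. I do not expect a genuine obstacle: the only point requiring care is the claim that a saturated projection reproduces the conditional mean function exactly (not merely in an $L^2$ sense), which follows from the positivity/full-rank assumption together with the fact that all regressors are indicators of a partition of the sample space.
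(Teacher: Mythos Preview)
Your proposal is correct and takes essentially the same approach as the paper: both identify $\Delta_{\rm sat}(a)=E[Y\mid D=1,A=a]-E[Y\mid D=0,A=a]$ (you via the saturated-regression/conditional-mean equivalence, the paper by explicitly solving the two normal equations you allude to), then add and subtract $E[Y(0,a)\mid D=1,A=a]$ and invoke Assumption~\ref{ass:strong-SOO}. There is no substantive difference.
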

\begin{proof}
Fix $a\in \mathcal{A}$ arbitrarily throughout this proof. By projection,
\begin{align}
E[ YI\{A=a\}]  & ~=~ \gamma (a)P\{A=a\}+\Delta _{\mathrm{sat}}(a)E[DI\{A=a\}]  \notag \\
E[ YDI\{A=a\}]  & ~=~ (\gamma (a)+\Delta _{\mathrm{sat}}(a))E[DI\{A=a\}]~.\label{eq:sat_3}
\end{align}
By $P\{ A=a\} >0$, \eqref{eq:sat_3} implies that 
\begin{align}
\gamma (a) & ~=~ E[ Y|A=a] -\Delta _{\mathrm{sat}}(a)P\{D=1|A=a\}   \label{eq:sat_4} \\
E[ YD|A=a]  & ~=~ (\gamma (a)+\Delta _{\mathrm{sat}}(a))P\{D=1|A=a\} ~.\label{eq:sat_5}
\end{align}
By plugging in \eqref{eq:sat_4} into \eqref{eq:sat_5}, we get
\begin{equation}
E[ YD|A=a] -E[ Y|A=a] P\{D=1|A=a\}  ~=~ \Delta _{\mathrm{sat}}(a)P\{ D=1|A=a\}P\{ D=0|A=a\} ~.\label{eq:sat_6}
\end{equation}
By \eqref{eq:sat_6} and $P\{ D=1|A=a\} \in ( 0,1) $, we get that
\begin{equation}
\Delta _{\mathrm{sat}}(a) ~=~ E[ Y(1,a)|D=1,A=a] -E[ Y(0,a)|D=0.A=a].\label{eq:sat_7}
\end{equation}
The desired result follows from adding and subtracting $E[ Y(0,a)|D=1,A=1]$ to \eqref{eq:sat_7}.
\end{proof}

\subsection{Auxiliary Lemmas}\label{app:aux-lemmas}

\begin{lemma}\label{lem:seq-igno}
The following statements are true. 
\begin{enumerate}[(a)]
    \item Assumption \ref{ass:seq-igno} implies Assumption \ref{ass:joint-SOO}.
    \item Assumption \ref{ass:joint-SOO} does not imply Assumption \ref{ass:seq-igno}.
    \item Assumption \ref{ass:seq-igno} implies that $ Y(\tilde{d},a)\perp A(d) \mid X$ for $(\tilde{d},d,a)\in  \mathcal D \times \mathcal D\times \mathcal A $.
\end{enumerate}
\end{lemma}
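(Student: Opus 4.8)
The plan is to obtain parts (a) and (c) by elementary manipulations of the conditional independence relations in \eqref{eq:seq-ign-1}--\eqref{eq:seq-ign-2} together with the consistency identity $A=A(D)$, and to settle part (b) with an explicit counterexample.

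For part (a) I would fix a value $x$ of $X$ and show that for every $d''\in\mathcal D$, $a''\in\mathcal A$, and Borel set $B$,
\[
P\{D=d'',A=a'',Y(d,a)\in B\mid X=x\}=P\{D=d'',A=a''\mid X=x\}\,P\{Y(d,a)\in B\mid X=x\}.
\]
First use consistency to replace the event $\{D=d'',A=a''\}$ by $\{D=d'',A(d'')=a''\}$. Then apply \eqref{eq:seq-ign-1} to the pair $(Y(d,a),A(d''))$: this factors $P\{D=d''\mid X=x\}$ out, and since the same display gives $(Y(d,a),A(d''))\perp D\mid X=x$, it also lets me rewrite the remaining joint probability of $(A(d''),Y(d,a))$ as a probability conditional on $\{D=d'',X=x\}$. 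On that conditioning set I can invoke \eqref{eq:seq-ign-2}, namely $Y(d,a)\perp A(d'')\mid D=d'',X=x$, to split the joint into the product of the two marginals, and then use \eqref{eq:seq-ign-1} once more to remove the conditioning on $\{D=d''\}$ from each marginal. Reassembling the pieces — and using \eqref{eq:seq-ign-1} plus consistency to recombine $P\{D=d''\mid X=x\}P\{A(d'')=a''\mid X=x\}$ into $P\{D=d'',A=a''\mid X=x\}$ — yields the claimed factorization, hence Assumption \ref{ass:strong-SOO}. Part (c) is a by-product of the same computation: \eqref{eq:seq-ign-1} collapses the conditioning on $\{D=d\}$ and \eqref{eq:seq-ign-2} supplies the factorization, which gives $Y(\tilde d,a)\perp A(d)\mid X$ for every $(\tilde d,d,a)$.

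For part (b) I would exhibit a data-generating process satisfying Assumption \ref{ass:strong-SOO} but not Assumption \ref{ass:seq-igno}. One useful preliminary observation is that Assumption \ref{ass:strong-SOO} already implies the second part \eqref{eq:seq-ign-2} of sequential ignorability (conditioning $(D,A)\perp Y(d,a)\mid X$ on $\{D=d\}$ and using $A=A(d)$ there), so any counterexample must violate the first part \eqref{eq:seq-ign-1}, which is where the construction bites. Concretely, take $X$ degenerate, let $D$ and $A^{\ast}$ be two dependent Bernoulli variables with full support on $\{0,1\}^2$, set $A(0)=A(1)=A^{\ast}$ (so the other action does not respond to $D$ and the observed $A$ equals $A^{\ast}$), and let $Y(d,a)=\varepsilon$ for all $(d,a)$, where $\varepsilon$ is any nondegenerate variable with $\varepsilon\perp(D,A^{\ast})$. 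Since $Y(d,a)$ is independent of $(D,A)$, Assumption \ref{ass:strong-SOO} holds. However, \eqref{eq:seq-ign-1} requires $(Y(d',a),A(d))\perp D\mid X$ and hence $A(d)\perp D$, which fails because $A^{\ast}$ and $D$ are dependent; I would also verify the overlap conditions $0<P\{D=d\}$ and $0<P\{A(d)=a\mid D=d\}$ for this DGP, so that the violation is genuine rather than vacuous.

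I expect the main obstacle to be the bookkeeping in part (a): one must use \eqref{eq:seq-ign-1} as a \emph{joint} independence of the pair $(Y(d',a),A(d))$ from $D$ — the separate independences $Y(d',a)\perp D$ and $A(d)\perp D$ would not suffice — and one must instantiate the indices $d,d'$ in \eqref{eq:seq-ign-1}--\eqref{eq:seq-ign-2} consistently with the realized treatment value at each step, so that the conditioning event $\{D=d''\}$ is removed and restored legitimately. Everything else is routine.
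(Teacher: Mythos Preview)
Your proposal is correct and follows essentially the same route as the paper: parts (a) and (c) are exactly the paper's chain of equalities (consistency $A=A(D)$, then \eqref{eq:seq-ign-1} to strip and restore the conditioning on $D$, then \eqref{eq:seq-ign-2} to factor), and your counterexample for (b) is a mild variant of the paper's (the paper takes $A(0)=A(1)=D$ and $Y(d,a)\equiv 0$, whereas you take $A(0)=A(1)=A^\ast$ dependent with $D$ and $Y(d,a)\equiv\varepsilon\perp(D,A^\ast)$), with your added observation that Assumption \ref{ass:strong-SOO} already forces \eqref{eq:seq-ign-2} being a nice touch the paper does not make explicit.
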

\begin{proof}
    \underline{Part (a)}. For any $( \tilde{d},\tilde{a},y,a,d) $, we have
\begin{align}
P\{ Y(\tilde{d},\tilde{a})\leq y,A( d)=a,D=d|X\}
&~\overset{(1)}{=}~P\{ Y(\tilde{d},\tilde{a})\leq y|X\}~ P\{ A(d)=a|X\}~P\{ D=d|X\}  \notag \\
&~\overset{(2)}{=}~P\{ Y(\tilde{d},\tilde{a})\leq y|X\}~ P\{A( d) =a,D=d|X\}   \notag \\
&~\overset{(3)}{=}~P\{ Y(\tilde{d},\tilde{a})\leq y|X\}~ P\{A=a,D=d|X\}~, \label{eq:ass_4}
\end{align}
where (1) holds by Assumption \ref{ass:seq-igno}, (2) holds by Assumption \ref{ass:seq-igno}(i), and (3) holds by $A( D) =A$. Since $( \tilde{d},\tilde{a},y,a,d)$ is arbitrary, \eqref{eq:ass_4} implies Assumption \ref{ass:joint-SOO}.

   \underline{Part (b)}. Consider the following example. Assume $X\perp ( D,( A( d) :d\in \mathcal{D})' ,( Y(\tilde{d},a):( \tilde{d},a) \in \mathcal{D}\times \mathcal{A})' )' $, $Y(d,a)=0$ for all $(d,a) $, $( A( 1) ,A( 0) ) =(D,D) $, and $D\sim {\rm Be}( 0.5) $. Since $Y(d,a)=0$, it is independent of $( D,A( D) ) =( D,D) $. Thus, Assumption \ref{ass:joint-SOO} holds. By $Y(d,a)=0$ for all $( d,a) $ and also $A( d) =D$, we have $Y(\tilde{d},a)\perp A(d)|D$, so Assumption \ref{ass:seq-igno}(ii) holds. However, $(Y(\tilde{d},a),A(d))=( 0,D) \not\perp D$, and so Assumption \ref{ass:seq-igno}(i) fails.

   \underline{Part (c)}. For any  $( \tilde{d},\tilde{a},y,a,d) $, we have
\begin{align}
P\{ Y(\tilde{d},\tilde{a})\leq y,A(d)=a|X\}  &~\overset{(1)}{=}~P\{Y(\tilde{d},\tilde{a})\leq y,A(d)=a|X,D\}  \notag\\
&~\overset{(2)}{=}~P\{ Y(\tilde{d},\tilde{a})\leq y|X,D\} ~ P\{ A(d)=a|X,D\}  \notag\\
&~\overset{(3)}{=}~P\{ Y(\tilde{d},\tilde{a})\leq y|X\} ~ P\{ A(d)=a|X\} ~,\label{eq:ass_5}
\end{align}
where (1) and (3) hold by Assumption \ref{ass:seq-igno}(i), and (2) holds by Assumption \ref{ass:seq-igno}(ii).  Since $( \tilde{d},\tilde{a},y,a,d) $ is arbitrary, \eqref{eq:ass_5} implies Assumption \ref{ass:joint-SOO}.  
\end{proof}

\begin{lemma}\label{lem:linear-mu}
Assume the conditions in Theorem \ref{thm:long}, and that
\begin{equation}
\mu (d,a)~=~\kappa _{0}+\kappa _{1}d+\kappa _{2}^{\prime }a~~\text{ for all }~(d,a)\in \{0,1\}\times \mathcal{A} \label{eq:imai1}
\end{equation}
for some constants $\kappa_0,\kappa_1,\kappa_2$. First, the coefficients in \eqref{eq:long-regression} satisfy $\Delta_{\mathrm{long}} =\kappa _{1}$, $\theta_0 = \kappa_0$, and $\theta_1 = \kappa_2$. Second, the terms in the decomposition in \eqref{eq:long-decomposition} are $\Delta _{\mathrm{dce}}^{\mathrm{l}}=\kappa _{1}$ and $\Delta _{\mathrm{ind}}^{\mathrm{l}}=0$.
\end{lemma}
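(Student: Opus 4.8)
The plan is to use two facts: Assumption \ref{ass:strong-SOO} forces the conditional mean $E[Y\mid D,A]$ to coincide with $\mu(D,A)$, and the positive definiteness of $\var((D,A'))$ makes the coefficients of the linear projection in \eqref{eq:long-regression} unique. First I would record that, within cells of $X$ (the convention adopted throughout Section \ref{sec:decomposition}), Assumption \ref{ass:strong-SOO} together with \eqref{eq:obsy} gives $E[Y\mid D=d,A=a]=E[Y(d,a)\mid D=d,A=a]=\mu(d,a)$, so \eqref{eq:imai1} yields $E[Y\mid D,A]=\kappa_0+\kappa_1 D+\kappa_2'A$. Since the long regression is the linear projection of $Y$ onto $(1,D,A')'$, and a linear projection depends on $Y$ only through $E[Y\mid D,A]$ — apply the law of iterated expectations to the normal equations $E[(1,D,A')'(Y-(\Delta_{\rm long}D+\theta_0+\theta'A))]=\mathbf 0$ — the coefficients $(\theta_0,\Delta_{\rm long},\theta)$ solve the same normal equations with $E[Y\mid D,A]$ substituted for $Y$. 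But $E[Y\mid D,A]$ is already an exact affine function of $(1,D,A')'$, and uniqueness of the solution (guaranteed by the positive definiteness assumption) then forces $(\theta_0,\Delta_{\rm long},\theta)=(\kappa_0,\kappa_1,\kappa_2)$, which is the first assertion.

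For the second assertion I would simply feed the linear form of $\mu$ into the decomposition already established in Theorem \ref{thm:long}, namely $\Delta_{\rm long}=\Delta_{\rm dce}^{\rm l}+\Delta_{\rm ind}^{\rm l}$ with $\Delta_{\rm dce}^{\rm l}=\sum_{a\in\mathcal A}\omega^{\rm l}_{\rm dce}(a)(\mu(1,a)-\mu(0,a))$ and $\Delta_{\rm ind}^{\rm l}=\sum_{a\in\mathcal A}\omega^{\rm l}_{\rm ind}(a)(\mu(0,a)-\mu(0,0))$. Under \eqref{eq:imai1} one has $\mu(1,a)-\mu(0,a)=\kappa_1$ for every $a\in\mathcal A$, hence $\Delta_{\rm dce}^{\rm l}=\kappa_1\sum_{a\in\mathcal A}\omega^{\rm l}_{\rm dce}(a)=\kappa_1$ because those weights sum to one by Theorem \ref{thm:long}; and $\mu(0,a)-\mu(0,0)=\kappa_2'a$, hence $\Delta_{\rm ind}^{\rm l}=\kappa_2'\sum_{a\in\mathcal A}a\,\omega^{\rm l}_{\rm ind}(a)=0$ because $\sum_{a\in\mathcal A}a\,\omega^{\rm l}_{\rm ind}(a)=\mathbf 0$ by Theorem \ref{thm:long_pre}.

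I do not anticipate a genuinely hard step: the whole content is that linearity of $\mu$ makes the projection exact, and that the two weight identities — summing to one, and the $a$-weighted sum vanishing — are exactly what is needed to pin down $\Delta_{\rm dce}^{\rm l}$ and annihilate $\Delta_{\rm ind}^{\rm l}$. The only points that deserve an explicit sentence are the reduction \emph{projection of $Y$ equals projection of $E[Y\mid D,A]$}, which is immediate from iterated expectations, and the invocation of positive definiteness to rule out non-uniqueness of the projection coefficients. One can close by remarking that this recovers the conclusion of \citet[][Theorem 2]{imai2010}, since then $\Delta_{\rm long}=\kappa_1=\mu(1,a)-\mu(0,a)=\bar\zeta$ for any $a\in\mathcal A$.
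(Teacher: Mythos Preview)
Your proposal is correct and follows essentially the same approach as the paper: for the first claim you show $E[Y\mid D,A]=\mu(D,A)$ is linear and invoke uniqueness of projection coefficients, which is exactly the paper's ``linear regression consistently estimates the parameters of a linear conditional expectation''; for the second claim you plug \eqref{eq:imai1} into the decomposition and use $\sum_{a}\omega^{\rm l}_{\rm dce}(a)=1$ and $\sum_{a}a\,\omega^{\rm l}_{\rm ind}(a)=\mathbf 0$ from Theorem \ref{thm:long_pre}, precisely as the paper does. Your write-up is simply a bit more explicit about the iterated-expectations reduction and the role of positive definiteness, which is fine.
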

\begin{proof}
Assumption \ref{ass:joint-SOO} implies that $E(Y|D=d,A=a)=\mu (d,a)$ which, combined with \eqref{eq:imai1}, implies that the conditional expectation of $Y$ is linear in $(1,a,d)$. From here, the first result holds because the linear regression consistently estimates the parameters of a linear conditional expectation. The second part follows immediately from combining \eqref{eq:imai1} with $\sum_{a\in \mathcal{A}}a\omega _{\mathrm{ind}}^{\mathrm{l}}(a)={\bf 0}$ and $\sum_{a\in \mathcal{A}}\omega _{\mathrm{dce}}^{\mathrm{l}}(a)=1$ (both shown in Theorem \ref{thm:long_pre}).
\end{proof}

\begin{lemma}
The examples used in the proofs of Theorem \ref{thm:long} and \ref{thm:inter} can be completed to satisfy Assumption \ref{ass:seq-igno}.
\end{lemma}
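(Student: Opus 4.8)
The plan is to observe that every example appearing in the proofs of Theorems~\ref{thm:long} and \ref{thm:inter} specifies \emph{only} the joint law of $(A,D)$ --- namely a value of $P\{D=1\}$ together with the conditional laws $\mathcal L(A\mid D=0)$ and $\mathcal L(A\mid D=1)$ (products of Binomial or Bernoulli laws, sometimes with auxiliary coordinates required merely to be independent of the rest with positive variance). Since the weights $\omega^{\rm l}_{\rm dce},\omega^{\rm l}_{\rm ind},\omega^{\rm i}_{\rm dce},\omega^{\rm i}_{\rm ind}$, as well as the positive-definiteness conditions invoked in Theorems~\ref{thm:long_pre} and \ref{thm:inter_pre}, depend only on that law, it suffices to exhibit for each example a full data-generating process on $(X,D,(A(d))_{d\in\mathcal D},(Y(d,a))_{(d,a)\in\mathcal D\times\mathcal A})$ that (i) reproduces the prescribed law of $(A,D)$ through $A=A(D)$ and (ii) satisfies Assumption~\ref{ass:seq-igno}; all the conclusions of those proofs (the failure of statements (b) and (c)) then carry over unchanged.

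The construction I would use is as follows. Take $X$ degenerate, so conditioning on $X$ is vacuous. For each $d\in\{0,1\}$, let $A(d)$ be a random vector whose law equals the conditional law $\mathcal L(A\mid D=d)$ from the example, assigning any unspecified ``auxiliary'' coordinate a full-support law such as $\mathrm{Ber}(1/2)$, so that $\mathcal L(A(d))$ puts positive mass on every point of $\mathcal A$. Draw $(A(0),A(1),D)$ mutually independent. Finally set the potential outcomes $Y(d,a)\equiv c(d,a)$ to be nonrandom constants, otherwise free. If one additionally wants the completed example to exhibit a failure of strong sign preservation under sequential ignorability, choose $c(0,\cdot)$ constant in its second argument (so that $\Delta^{\rm l}_{\rm ind}=\Delta^{\rm i}_{\rm ind}=0$) and $c(1,a)-c(0,a)>0$ for all $a$ with a large value at a coordinate carrying a negative weight, which forces $\Delta^{\rm l}_{\rm dce}<0$, resp.\ $\Delta^{\rm i}_{\rm dce}<0$, despite all partial effects being positive.

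The verifications are then routine. Because $A(d)\perp D$, we have $\mathcal L(A\mid D=d)=\mathcal L(A(d)\mid D=d)=\mathcal L(A(d))$, which is exactly the conditional law used in the example, and $P\{D=1\}$ is as prescribed; hence the algebra computing the weights in Theorems~\ref{thm:long_pre} and \ref{thm:inter_pre} applies verbatim to the completed model. For Assumption~\ref{ass:seq-igno}, conditioning on $X$ drops out, and: \eqref{eq:seq-ign-1} holds because $(Y(d',a),A(d))=(c(d',a),A(d))$ is a function of $A(d)$ alone, which is independent of $D$ by construction; \eqref{eq:seq-ign-2} holds because, given $D=d$, $Y(d',a)=c(d',a)$ is constant and hence independent of $A(d)$; and the positivity clauses hold since $P\{D=d\}\in(0,1)$ and, by the full-support choice of $\mathcal L(A(d))$, $P\{A(d)=a\}>0$ for every $a\in\mathcal A$. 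Note that such a construction automatically satisfies Assumption~\ref{ass:strong-SOO}, consistent with Lemma~\ref{lem:seq-igno}(a).

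The only point requiring a little care --- and the nearest thing to an obstacle --- is the positivity requirement $P\{A(d)=a\}>0$ for all $a\in\mathcal A$: one must check that every Binomial/Bernoulli success probability used in the examples lies strictly in $(0,1)$ (they all do: $0.1,0.2,0.3,0.7,0.8,0.9$), and that the coordinates left unspecified in the $K\ge 2$ examples can indeed be assigned a full-support law (they can). Everything else is immediate, because taking the potential outcomes constant and the potential actions independent of $D$ trivializes every conditional-independence statement in sequential ignorability.
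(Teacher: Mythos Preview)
Your proposal is correct and follows essentially the same approach as the paper: take $X$ trivial (degenerate in your version, independent of everything in the paper's), make the potential outcomes $Y(d,a)$ nonrandom constants, and draw $(A(0),A(1),D)$ mutually independent with $A(d)$ having the prescribed conditional law $\mathcal L(A\mid D=d)$. The paper only works out the $K=1$ case from Theorem~\ref{thm:long} explicitly and asserts the rest is similar, whereas you give a uniform construction for all examples and explicitly check the positivity clauses of Assumption~\ref{ass:seq-igno}; this is a minor elaboration rather than a different idea.
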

\begin{proof}
For brevity, we focus on the example in the proof of Theorem \ref{thm:long} when $K=1$. A similar argument can be made for all other examples.

Recall that the example in the proof of Theorem \ref{thm:long} when $K=1$ is as follows: $\{ A|D=0\} \sim {\rm Bi}(2,0.3) $, $\{ A|D=1\} \sim {\rm Bi}(2,0.9) $, and $ P\{D=1\} =0.5$, and so $\mathcal{A}=\{ 0,1,2\} $. The example is silent about $X$ or $\{Y(d,a):( {d},a) \in \mathcal{D}\times \mathcal{A}\}$, and so it is unclear whether Assumption \ref{ass:seq-igno} holds or not. We now provide one way to complete the specification of the example in a manner compatible with Assumption \ref{ass:seq-igno}. 

Assume that $X\perp ( \{ Y(d,a):( d,a) \in \mathcal{D}\times \mathcal{A}\},D,\{ A( \tilde{d}) :\tilde{d}\in \mathcal{D}\} ) $, $\{ Y(d,a):( d,a) \in \mathcal{D}\times \mathcal{A}\} $ non-stochastic and equal to $\{ \mu (d,a):( {d},a) \in \mathcal{D}\times \mathcal{A}\} $, $A( 0) \sim {\rm Bi}(2,0.3)$, $A( 1) \sim {\rm Bi}(2,0.9)$, $D\sim {\rm Be}( 0.5) $, and $\{ A( 1) ,A( 0),D\} $ are independent random variables. These conditions imply that $A(0) \overset{d}{=}\{A( 0) |D=0\}=\{A|D=0\}\sim {\rm Bi}(2,0.3)$, $A(1) \overset{d}{=}\{A( 1) |D=1\}=\{A|D=1\}\sim {\rm Bi}(2,0.9)$, and $P\{D=1\}=0.5$, as required by the example. Next, we show that the completed example satisfies Assumption \ref{ass:seq-igno}. First, we have that Assumption \ref{ass:seq-igno}(i) holds from the fact that $X$ is independent of the rest of the problem, $\{ Y({d},a):(d,a) \in \mathcal{D}\times \mathcal{A}\} $ is non-stochastic, and $A(d) \perp D$. Second, we have that Assumption \ref{ass:seq-igno}(ii) follows from the fact that $X$ is independent of the rest of the problem and $\{ Y({d},a):(d,a) \in \mathcal{D}\times \mathcal{A}\} $ is non-stochastic.
\end{proof}

\begin{lemma}\label{lem:inter-w-lambda}
Consider the setup in Theorem \ref{thm:inter} and that $A$ is scalar. Then, the coefficients in \eqref{eq:inter-regression} satisfy the following decomposition:
\begin{equation}
\Delta _{\mathrm{inter}}+E[ A] \lambda ~=~\sum_{a\in \mathcal{A}}\omega _{\mathrm{dce}}^{\mathrm{i \star}}(a)(E[Y(1,a)-Y(0,a)])+ \omega _{\mathrm{ind}}^{\mathrm{i \star}}(a) (E[ Y( 0,a) -Y( 0,0) ])~, \label{eq:decomp_phi}
\end{equation}
where
\begin{align}
\Delta &~=~\var(AD)\var(A) -( \cov(DA,A)) ^{2} \notag\\
\Psi &~=~1+\frac{E[ A] }{\Delta }(\cov(A,DA)\cov(A,D)-\var( A) \cov(DA,D)) \notag\\
\omega _{\mathrm{dce}}^{\rm i\star}(a) &~=~ \Psi \omega _{\mathrm{dce}}^{\mathrm{i}}(a)+\frac{E[ A] }{\Delta } \left(\var(A) p\pi _{1}(a)(a-pE[A|D=1]) -\cov(A,DA)(a-E[A])\pi _{1}(a)p \right)\notag\\
\omega _{\mathrm{ind}}^{\rm i\star}(a) &~=~ \Psi \omega _{\mathrm{ind}}^{\mathrm{i}}(a)+\frac{E[ A] }{\Delta }\left( \var(A) p(\pi _{1}(a)a-p_{a}E[A|D=1]) -\cov(A,DA)(a-E[A])p_{a}\right)~. \label{eq:weights_phi}
\end{align}
Moreover, $\sum_{a\in \mathcal{A}}\omega _{\mathrm{dce}}^{\rm i \star}(a)=1$ and $\sum_{a\in \mathcal{A}}\omega _{\mathrm{ind}}^{\rm i \star}(a)=0$. Furthermore, it is possible to have $\omega _{\mathrm{dce}}^{\rm i \star}(a)<0$ and $\omega _{\mathrm{ind}}^{\rm i\star}(a)\neq 0$ for some $a\in \mathcal{A}$.
\end{lemma}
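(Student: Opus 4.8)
The plan is to solve the normal equations of the interaction regression \eqref{eq:inter-regression} for the coefficient $\lambda$ on $AD$, add $E[A]\lambda$ to $\Delta_{\rm inter}$, and then re-express everything through the decompositions of $\Delta_{\rm inter}$, $\cov(A,Y)$, and $\cov(AD,Y)$ already available from Theorem~\ref{thm:inter_pre}. Since the lemma is stated under the setup of Theorem~\ref{thm:inter}, Assumption~\ref{ass:strong-SOO} holds, so all conditional means collapse, the selection summands in \eqref{eq:inter_pre1}, \eqref{eq:inter_pre8}, \eqref{eq:inter_pre9} vanish, and $\Delta_{\rm inter}=\sum_{a}\omega^{\rm i}_{\rm dce}(a)(\mu(1,a)-\mu(0,a))+\sum_a\omega^{\rm i}_{\rm ind}(a)(\mu(0,a)-\mu(0,0))$, writing $\mu(d,a)=E[Y(d,a)]$. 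With $A$ scalar, $W=(A,AD)'$ and $\alpha=(\theta,\lambda)'$, and the normal equations in the proof of Theorem~\ref{thm:inter_pre} give $\alpha=\var(W)^{-1}(\cov(W,Y)-\cov(W,D)\Delta_{\rm inter})$. The first step is to read off $\lambda$ as the second coordinate of this vector using the explicit $2\times2$ inverse of $\var(W)$, whose determinant is $\Delta=\var(AD)\var(A)-(\cov(DA,A))^{2}$:
\[
\lambda=\frac{1}{\Delta}\Big[\var(A)\big(\cov(AD,Y)-\cov(AD,D)\Delta_{\rm inter}\big)-\cov(DA,A)\big(\cov(A,Y)-\cov(A,D)\Delta_{\rm inter}\big)\Big]~.
\]

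The second step is to collect $\Delta_{\rm inter}+E[A]\lambda$. The terms proportional to $\Delta_{\rm inter}$ — the standalone $\Delta_{\rm inter}$ together with the two $-\cov(W,D)\Delta_{\rm inter}$ pieces inside $E[A]\lambda$ — combine, after using $\cov(AD,D)=\cov(DA,D)$, into $\Psi\,\Delta_{\rm inter}$ with $\Psi=1+\tfrac{E[A]}{\Delta}(\cov(A,DA)\cov(A,D)-\var(A)\cov(DA,D))$; since $\Delta_{\rm inter}$ itself already carries the weights $\omega^{\rm i}_{\rm dce}$ and $\omega^{\rm i}_{\rm ind}$, this piece contributes $\Psi\,\omega^{\rm i}_{\rm dce}(a)$ and $\Psi\,\omega^{\rm i}_{\rm ind}(a)$. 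The remaining piece, $\tfrac{E[A]}{\Delta}[\var(A)\cov(AD,Y)-\cov(DA,A)\cov(A,Y)]$, is expanded using \eqref{eq:inter_pre8} and \eqref{eq:inter_pre9}, where under Assumption~\ref{ass:strong-SOO} the selection summands drop and the conditional means become $\mu(d,a)$, and the coefficients of $\mu(1,a)-\mu(0,a)$ and of $\mu(0,a)-\mu(0,0)$ are read off termwise in $a$; after using $\cov(A,D)=p(E[A\mid D=1]-E[A])$ and $E[D]=p$ these match exactly the additive corrections displayed in \eqref{eq:weights_phi}, which establishes \eqref{eq:decomp_phi}. Merging the standalone $\Delta_{\rm inter}$ with the $-\cov(W,D)\Delta_{\rm inter}$ term into the single scalar $\Psi$, rather than re-expanding $\cov(A,Y)$, $\cov(AD,Y)$, and $\Delta_{\rm inter}$ all from scratch and matching an unwieldy expression, is where the bookkeeping has to be done carefully, and I expect it to be the main obstacle.

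For the adding-up identities I would sum \eqref{eq:weights_phi} directly rather than the long-form weights. Using $\sum_a\omega^{\rm i}_{\rm dce}(a)=1$ and $\sum_a\omega^{\rm i}_{\rm ind}(a)=0$ from Theorem~\ref{thm:inter_pre}, together with $\sum_a\pi_1(a)=\sum_a p_a=1$, $\sum_a\pi_1(a)a=E[A\mid D=1]$, and $\sum_a p_a a=E[A]$, one gets $\sum_a\omega^{\rm i\star}_{\rm dce}(a)=\Psi+\tfrac{E[A]}{\Delta}(\var(A)p(1-p)E[A\mid D=1]-\cov(A,DA)\,p(E[A\mid D=1]-E[A]))$; substituting $\cov(A,D)=p(E[A\mid D=1]-E[A])$ and $\cov(DA,D)=\sigma_D^2E[A\mid D=1]=p(1-p)E[A\mid D=1]$ into $\Psi$ shows this added term equals $-(\Psi-1)$, so the sum is $1$. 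For the indirect weights the summed correction vanishes outright, since $\sum_a(\pi_1(a)a-p_aE[A\mid D=1])=0$ and $\sum_a(a-E[A])p_a=0$, giving $\sum_a\omega^{\rm i\star}_{\rm ind}(a)=0$.

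Finally, for the claim that $\omega^{\rm i\star}_{\rm dce}(a)<0$ and $\omega^{\rm i\star}_{\rm ind}(a)\neq0$ can occur, I would evaluate the closed forms on a concrete non-pathological DGP — the $K=1$ example from the proof of Theorem~\ref{thm:inter}, $\{A\mid D=0\}\sim{\rm Bi}(2,0.3)$, $\{A\mid D=1\}\sim{\rm Bi}(2,0.9)$, $P\{D=1\}=0.5$, for which $\omega^{\rm i}_{\rm dce}$ already has a negative coordinate — and check numerically that the correction does not restore positivity; a quick complementary check is that the corrections are $O(E[A])$, so along a sequence of distributions of $(A,D)$ with $E[A]\to0$ but variance and cell probabilities bounded away from zero one has $\omega^{\rm i\star}\to\omega^{\rm i}$, and the latter has negative and nonzero entries by Theorem~\ref{thm:inter}.
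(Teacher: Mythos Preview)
Your derivation of \eqref{eq:decomp_phi} and \eqref{eq:weights_phi} is essentially the paper's proof: solve the $2\times 2$ normal equations for $\lambda$, group $\Delta_{\rm inter}+E[A]\lambda$ into $\Psi\,\Delta_{\rm inter}$ plus $\tfrac{E[A]}{\Delta}[\var(A)\cov(AD,Y)-\cov(A,DA)\cov(A,Y)]$, expand the covariances via \eqref{eq:inter_pre8}--\eqref{eq:inter_pre9}, and read off the weights under Assumption~\ref{ass:strong-SOO}. Your adding-up computations are correct and more explicit than the paper's (which just says ``repeating arguments used in the proof of Theorem~\ref{thm:inter_pre}'').

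There is, however, a genuine gap in your final step. Your ``complementary check'' with $E[A]\to 0$ is vacuous: since $A$ takes values in $\{0,1,2,\dots\}\subset\mathbb{N}$, Markov's inequality gives $P\{A\geq 1\}\leq E[A]\to 0$, so cell probabilities $p_a$ for $a\geq 1$ cannot stay bounded away from zero, and $\var(A)\to 0$ as well. Thus you cannot construct a sequence meeting all the conditions of Theorem~\ref{thm:inter} along which $\omega^{\rm i\star}\to\omega^{\rm i}$; the limiting argument does not furnish an admissible counterexample. For the numerical route, note that the paper does \emph{not} reuse the Theorem~\ref{thm:inter} example (${\rm Bi}(2,0.3)$/${\rm Bi}(2,0.9)$, $p=0.5$) but instead picks a different DGP ($\{A\mid D=0\}\sim{\rm Bi}(2,0.9)$, $\{A\mid D=1\}\sim{\rm Bi}(2,0.1)$, $P\{D=1\}=0.3$), obtaining $\omega^{\rm i\star}_{\rm dce}\approx[-0.2,1.08,0.12]$ and $\omega^{\rm i\star}_{\rm ind}\approx[-0.26,0.52,-0.26]$. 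You should actually carry out the numerics rather than presume the $\Psi$-correction preserves the negativity from $\omega^{\rm i}_{\rm dce}$; there is no structural reason it must.
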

\begin{proof}
By properties of projection,
\begin{equation*}
( \var(W)) ^{-1}( \cov(W,Y)-\cov(W,D)\Delta _{\mathrm{inter} }) ~=~\alpha ~=~(\theta',\lambda')'~.
\end{equation*}
We can use the fact that $A$ is scalar to obtain an explicit formula for $( \var(W)) ^{-1}$. With this expression in hand, we get
\begin{align}
\Delta _{\mathrm{inter}}+E[ A]\lambda
~=~\Psi \Delta _{\mathrm{inter}} +\frac{E[ A] }{\Delta }(\var(A) \cov(DA,Y)-\cov(A,DA)\cov(A,Y))~,\label{eq:phi_mess}
\end{align}
By plugging in the expressions for \eqref{eq:inter_pre1}, \eqref{eq:inter_pre8}, \eqref{eq:inter_pre9} on the right-hand side of \eqref{eq:phi_mess}, imposing Assumption \ref{ass:joint-SOO}, we obtain \eqref{eq:decomp_phi} and \eqref{eq:weights_phi}.

By the definition of $\{(\omega _{\mathrm{dce}}^{\rm i\star}(a),\omega _{\mathrm{ind}}^{\rm i\star}(a)):a \in\mathcal{A}\}$ in \eqref{eq:weights_phi} and repeating arguments 
used in the proof of Theorem \ref{thm:inter_pre}, it is immediate to show that $\sum_{a\in \mathcal{A}}\omega _{\mathrm{dce}}^{\rm i\star}(a)=1$ and $\sum_{a\in \mathcal{A}}\omega _{\mathrm{ind}}^{\rm i\star}(a)=0$.

To conclude, it suffices to find an example in which $\omega _{\mathrm{dce}}^{\rm i\star}(a)<0$ and $\omega _{\mathrm{ind}}^{\rm i\star}(a)\not=0$ for some $a\in \mathcal{A}$. To this end, consider an example with $\{ A|D=0\} \sim {\rm Bi}(2,0.9)$, $\{ A|D=1\} \sim {\rm Bi}(2,0.1)$, and $P\{D=1\} =0.3$, and so $ \mathcal{A}=\{0,1,2\}$. By evaluating this information on \eqref{eq:weights_phi}, we get $\omega_{\rm dce}\approx [-0.2,1.08,0.12]$ and $\omega_{\rm ind}\approx  [ -0.26,0.52,-0.26]$, i.e., (b) and (c) fail.
\end{proof}

\begin{lemma}\label{lem:linear-mu-inter}
Assume the conditions in Theorem \ref{thm:inter}, and that 
\begin{equation}
\mu (d,a)~=~\kappa _{0}+\kappa _{1}d+\kappa _{2}a+\kappa _{3}^{\prime }ad~~\text{ for all }~(d,a)\in \{0,1\}\times \mathcal{A}  \label{eq:imai2}
\end{equation}
for some constants $\kappa _{0},\kappa _{1},\kappa _{2},\kappa _{3}$. Then, the coefficient in \eqref{eq:inter-regression} satisfies $\Delta _{\mathrm{inter}}=\kappa _{1}$, $\theta _{0}=\kappa _{0}$, $\theta =\kappa _{2}$, and $\lambda =\kappa _{3}$. Furthermore, the decomposition in \eqref{eq:inter-decomposition} are $\Delta _{\mathrm{dce}}^{\mathrm{i}}=\kappa _{1}$ and $\Delta _{\mathrm{ind}}^{\mathrm{i}}=0$.
\end{lemma}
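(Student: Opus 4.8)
The plan is to follow the same route used to establish Lemma~\ref{lem:linear-mu}. First I would note that Assumption~\ref{ass:strong-SOO} delivers $E[Y\mid D=d,A=a]=\mu(d,a)$ for every $(d,a)\in\{0,1\}\times\mathcal A$, so that substituting the form \eqref{eq:imai2} gives $E[Y\mid D,A]=\kappa_0+\kappa_1 D+\kappa_2'A+\kappa_3'(AD)$; i.e., $E[Y\mid D,A]$ is an affine function of exactly the regressors $(1,D,A_1,\dots,A_K,DA_1,\dots,DA_K)$ entering the interaction regression \eqref{eq:inter-regression}. Since the variance--covariance matrix of $(A,D,AD)$ is positive definite under the hypotheses of Theorem~\ref{thm:inter}, the linear projection of $Y$ onto these regressors is unique; and because a conditional expectation that is already affine in the regressors coincides with the corresponding best linear predictor, the projection coefficients must equal those in \eqref{eq:imai2}, i.e., $\Delta_{\rm inter}=\kappa_1$, $\theta_0=\kappa_0$, $\theta=\kappa_2$, and $\lambda=\kappa_3$.

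For the claim about the decomposition, I would combine this first part with the moment identities already established in Theorem~\ref{thm:inter_pre}. Observe that \eqref{eq:imai2} gives $\mu(0,a)-\mu(0,0)=\kappa_2'a$ for all $a\in\mathcal A$, so that
\[
\Delta_{\rm ind}^{\rm i}=\sum_{a\in\mathcal A}\omega^{\rm i}_{\rm ind}(a)\,\kappa_2'a=\kappa_2'\Big(\sum_{a\in\mathcal A}a\,\omega^{\rm i}_{\rm ind}(a)\Big)=0,
\]
by the identity $\sum_{a\in\mathcal A}a\,\omega^{\rm i}_{\rm ind}(a)=\mathbf 0$ recorded in Theorem~\ref{thm:inter_pre}. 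Then, since the decomposition \eqref{eq:inter-decomposition} always holds, $\Delta_{\rm inter}=\Delta_{\rm dce}^{\rm i}+\Delta_{\rm ind}^{\rm i}$, and $\Delta_{\rm inter}=\kappa_1$ by the first part, it follows that $\Delta_{\rm dce}^{\rm i}=\kappa_1$. Arguing by subtraction is convenient here: it avoids needing a separate identity for $\sum_{a\in\mathcal A}a\,\omega^{\rm i}_{\rm dce}(a)$, which is not among those listed in Theorem~\ref{thm:inter_pre}.

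I do not expect a substantive obstacle; the result is essentially a corollary of Theorems~\ref{thm:inter} and~\ref{thm:inter_pre}, paralleling Lemma~\ref{lem:linear-mu}. The only point deserving care is the first step: one must check that the form imposed in \eqref{eq:imai2} lies exactly in the linear span of the regressors used in \eqref{eq:inter-regression} (the same $A_j$'s and $A_jD$'s, no more and no fewer), since otherwise the projection need not recover the $\kappa$'s. Once that bookkeeping is confirmed, uniqueness of the projection under the positive-definiteness assumption closes the argument.
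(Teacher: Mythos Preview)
Your proposal is correct and follows essentially the same approach as the paper's proof: both use Assumption~\ref{ass:strong-SOO} to identify $E[Y\mid D,A]$ with $\mu(d,a)$, invoke uniqueness of the projection to match coefficients, and then use the identity $\sum_{a\in\mathcal A}a\,\omega^{\rm i}_{\rm ind}(a)=\mathbf 0$ from Theorem~\ref{thm:inter_pre} together with \eqref{eq:imai2} to conclude $\Delta_{\rm ind}^{\rm i}=0$. Your explicit step of recovering $\Delta_{\rm dce}^{\rm i}=\kappa_1$ by subtraction from $\Delta_{\rm inter}=\Delta_{\rm dce}^{\rm i}+\Delta_{\rm ind}^{\rm i}$ makes fully visible what the paper leaves implicit, and your observation that no identity for $\sum_{a\in\mathcal A}a\,\omega^{\rm i}_{\rm dce}(a)$ is available (so a direct computation of $\Delta_{\rm dce}^{\rm i}$ is not immediate) is a nice point of care.
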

\begin{proof}
Assumption \ref{ass:joint-SOO} implies that $E(Y|D=d,A=a)=\mu (d,a)$ which, combined with \eqref{eq:imai2}, implies that the conditional expectation of $Y$ is linear in $(1,a,d,ad)$. From here, the first result follows from the fact that the linear regression consistently estimates the parameters of a linear conditional expectation. The second part follows from combining $\sum_{a\in \mathcal{A}}a\omega _{\mathrm{ind}}^{\mathrm{i}}(a)=\mathbf{0}$ (shown in Theorem \ref{thm:inter_pre}) and \eqref{eq:imai2}.
\end{proof}

\begin{example}\label{ex:basic_with_selection}
Theorem \ref{thm:short} shows that Assumption \ref{ass:joint-SOO} implies that $\Delta_{\rm sel}^{\rm s}=0$. We now provide an example that demonstrates Assumption \ref{ass:basic-SOO} is insufficient to guarantee $\Delta_{\rm sel}^{\rm s} \neq 0$.

Consider the case where $A$ is binary and $D$ is independent of $(A(0), A(1), Y(0,0), Y(1,0), Y(0,1), Y(1,1))$. Assume in addition that $A(0), A(1)$ and $D$ are all independent Bernoulli random variables with $p=0.5$. Note that $A\perp D$ so that $\pi_1(1)=\pi_0(1)$. Next, specify the potential outcomes as $Y(0,0) = Y(1,0) =0$,  $Y(1,1) = 1$, and $Y(0,1) = I\{A(1) > A(0)\}$. 

This example satisfies Assumption \ref{ass:basic-SOO}, as $Y(d) = Y(d, A(d)) \perp D$ for $d = 0,1$. Using \eqref{eq:short-dce-1}–\eqref{eq:short-sel-1} and $\Delta_{\rm short} = E[Y|D=1]- E[Y|D=0]$ yield:
\[
\Delta_{\rm short} = 0.5~, \quad\quad \Delta_{\rm dce}^s = 0.25~, \quad\quad \Delta_{\rm ind}^{\rm s} = 0~, \quad\quad \Delta_{\rm sel}^{\rm s} = 0.25~.
\]
Concretely, while $ \Delta_{\rm ind}^{\rm s} = 0$ follows directly from $\pi_1(1)=\pi_0(1)$, the example illustrates that $\Delta_{\rm sel}^{\rm s} \ne 0$ under Assumption \ref{ass:basic-SOO}. Finally, this example does not contradict the second part of Theorem \ref{thm:short}, since Assumption \ref{ass:joint-SOO} does not hold. 
\end{example}

\end{appendix}

\end{document}